\tikzstyle{vertex}=[circle, draw, inner sep=0pt, minimum size=4pt, fill = black]
\newcommand{\multiline}[1]{%
  \begin{tabularx}{\dimexpr\linewidth-\ALG@thistlm}[t]{@{}X@{}}
    #1
  \end{tabularx}
}
\def\BState{\State\hskip-\ALG@thistlm}
\titlespacing{\section}{0pt}{3ex}{2ex}
\titlespacing{\subsection}{0pt}{2ex}{1ex}
\titlespacing{\subsubsection}{0pt}{0.5ex}{0ex}
\newtheorem{theorem}{Theorem}[section]
\newtheorem{corollary}{Corollary}[section]
\newtheorem{definition}{Definition}[section]
\newtheorem{lemma}{Lemma}[section]
\newtheorem{claim}{Claim}
\newtheorem{hypothesis}{Hypothesis}
\let\c@fconjecture\c@conjecture
\let\c@fconj\c@conj
\def \R {{\mathbb R}}
\def \eps {\varepsilon}
\newcommand{\ignore}[1]{}
\def \polylog { \text{\rm polylog~} }
\title{Truly Subcubic Min-Plus Product for Less Structured Matrices, with Applications}
\author{Virginia Vassilevska Williams\thanks{MIT EECS and CSAIL, virgi@mit.edu} ~and Yinzhan Xu\thanks{MIT EECS, xyzhan@mit.edu}}
\date{}
\begin{document}

\maketitle

\begin{abstract}
The All-Pairs Shortest Paths (APSP) problem is one of the most basic problems in computer science. The fastest known algorithms for APSP in $n$-node graphs run in $n^{3-o(1)}$ time, and it is a big open problem whether a truly subcubic, $O(n^{3-\eps})$ for $\eps>0$ time algorithm exists for APSP.
The Min-Plus product of two $n\times n$ matrices is known to be equivalent to APSP, where the optimal running times of the two problems differ by at most a constant factor. A natural way to approach understanding the complexity of APSP is thus understanding what structure (if any) is needed to solve Min-Plus product in truly subcubic time. The goal of this paper is to get truly subcubic algorithms for Min-Plus product for less structured inputs than what was previously known, and to apply them to versions of APSP and other problems. The results are as follows:

(1) Our main result is the first truly subcubic algorithm for the Min-Plus product of two $n\times n$ matrices $A$ and $B$ with $\polylog n$ bit integer entries, where $B$ has a partitioning into $n^{\eps}\times n^{\eps}$ blocks (for any $\eps>0$) where each block is at most $n^\delta$-far (for $\delta<3-\omega$, where $2\leq \omega<2.373$) in $\ell_\infty$ norm from a constant rank integer matrix.
This result presents the most general case to date of Min-Plus product that is still solvable in truly subcubic time.

(2) The first application of our main result is a truly subcubic algorithm for APSP in a new type of geometric graph.
Chan'10 solved APSP in truly subcubic time in geometric graphs whose edges have weights that are a function of the identities of the edge's end-points. Our result extends Chan's result in the case of integer edge weights by allowing the weights to differ from a function of the end-point identities by at most $n^\delta$ for small $\delta$.

(3) In a second application we consider a batch version of the range mode problem in which one is given a sequence of numbers $a_1,\ldots,a_n$ and $n$ intervals defining contiguous subsequences, and one is asked to compute the range mode of each subsequence. Chan et al.'14 showed that any $O(n^{1.5-\eps})$ time combinatorial algorithm for $\eps>0$ for this problem can be used to solve Boolean matrix multiplication combinatorially in truly subcubic time. We give the first $O(n^{1.5-\eps})$ time for $\eps>0$ algorithm for this batch range mode problem, showing that the hardness is indeed constrained to combinatorial algorithms.

(4) Our final application is to the
 Maximum Subarray problem: given an $n\times n$ integer matrix, find the contiguous subarray of maximum entry sum. We show that Maximum Subarray can be solved in  truly subcubic, $O(n^{3-\eps})$ (for $\eps>0$) time, as long as every entry of the input matrix is no larger than $O(n^{0.62})$ in absolute value. This is the first truly subcubic algorithm for an interesting case of Maximum Subarray. The Maximum Subarray problem with arbitrary integer entries is known to be subcubically equivalent to APSP, in that a truly subcubic, $O(n^{3-\eps})$ time algorithm for $\eps>0$ for one problem would imply a truly subcubic algorithm for the other. Because of this it is believed that Maximum Subarray does not admit truly subcubic algorithms, without a restriction on the inputs. 

We also improve all the known conditional hardness results for the $d$-dimensional variant of Maximum Subarray, showing that many of the known algorithms are likely tight.
\end{abstract}

\newpage{}
\section{Introduction}

The All-Pairs Shortest Paths (APSP) problem is one of the most basic and well-studied problems in graph algorithms. Algorithms for APSP have been studied since the 1950s when the Floyd-Warshall algorithm achieved a running time of $O(n^3)$ in $n$-vertex graphs. Over the next six decades, some improvements over the cubic running time were developed, culminating in the current fastest $n^3/2^{\Theta(\sqrt{\log n})}$ time algorithm by Williams~\cite{williams2014faster}. Unfortunately, no {\em truly subcubic}, $O(n^{3-\eps})$ time for $\eps>0$ algorithm is known, and a hypothesis that such an algorithm does not exist for APSP has become prominent in the field of fine-grained complexity (see e.g. \cite{vsurvey}).

The so called {\em Min-Plus} product of matrices $A$ and $B$, defined as the matrix $C$ with $C_{i,j}=\min_k (A_{i,k}+B_{k,j})$, is known to be asymptotically {\em equivalent} to APSP (see e.g. \cite{fischermeyer}) in the sense that a $T(n)$ time algorithm for the Min-Plus product of two $n\times n$ matrices implies an $O(T(n))$ time algorithm for APSP in $n$-node graphs, and vice versa. Because of this equivalence, research on APSP algorithms typically involves studying the Min-Plus product directly.

A long line of research on APSP involves studying the Min-Plus product of structured matrices. The relationship between APSP and Min-Plus product extends to structured instances as well: Min-Plus product of structured matrices can be viewed as APSP in a structured layered graph with three layers.
Conversely, in the case of graphs with integer weights but also in many other cases, APSP on structured instances  is in truly subcubic time 
if Min-Plus product of an {\em arbitrary} matrix with a {\em structured} matrix (i.e. the graph's generalized adjacency matrix) is in truly subcubic time\footnote{For graphs with integer edge weights, this is true regardless of the structure, as one can always leverage two types of approaches to APSP: (1) compute the distances on paths that have at most $n^\delta$ vertices by iterating the Min-Plus product $n^\delta$ times, and (2) compute the distances on paths that have at least $n^\delta$ vertices by random sampling and a SSSP algorithm such as Dijkstra's, after the usual removal of negative edge weights using Johnson's trick and a truly subcubic time SSSP algorithm such as \cite{Goldberg95}.}. 

Studying structured instances of Min-Plus product/APSP is important for two main reasons:
\begin{itemize}
\item As an approach towards truly subcubic APSP: {\em What structure, if any, is needed to solve APSP in truly subcubic time? }
\item As an approach to solve other problems faster: APSP is a very versatile problem, and many other important problems that sometimes, on the face of it, seem to have nothing to do with shortest paths, can be reduced to APSP. Often, the instances that are produced in these reductions are actually structured, and one could exploit this structure to get faster algorithms.
\end{itemize}

In the 1990s, Alon, Galil and Margalit~\cite{AlonGM97} showed that the Min-Plus product of two $n\times n$ matrices of integers in $\{M,\ldots,M\}$ can be computed in $O(Mn^\omega\log(Mn))$ time, where $\omega<2.373$ \cite{vmmult,legallmult} is the matrix multiplication exponent; thus Min-Plus product is in truly subcubic time, as long as the matrix entries are small, $M<O(n^{3-\omega-\eps})$ for $\eps>0$. This result is used over and over in shortest paths algorithms. For instance, it implies that APSP in undirected \cite{ShoshanZ99} and directed \cite{zwickbridge} graphs with small enough integer weights is in truly subcubic time. 

Truly subcubic time algorithms for less and less structured versions of Min-Plus and APSP were developed over the years, e.g. \cite{AlonGM97,Yuster09,chan2010more,vwstoc06}. The most general structured Min-Plus algorithm to date is by Bringmann et al. \cite{BringmannGSW16}: Min-Plus product of two $n\times n$ integer matrices $A$ and $B$ is in truly subcubic time if $A$ is arbitrary and for all rows (or similarly, columns) of $B$, any two consecutive entries are close: $|B[i,j]-B[i,j+1]|\leq n^\delta$ for small enough $\delta>0$. ($B$ is then called a bounded difference matrix.) 

Bringmann et al. showed that their result on bounded difference matrices subsumes all previous results on truly subcubic Min-Plus product. They also gave several applications of their new algorithm, most notably for language edit distance and RNA folding, that were not possible with the prior results on structured Min-Plus product.

Even though it is very powerful, the Bringmann et al. Min-Plus product result is still not general enough to solve some well-structured Min-Plus instances. We give one simple example. Consider a matrix $X$ such that for every $i,j$, $|X[i,j]+X[i+1,j+1]-X[i+1,j]-X[i,j+1]|\leq 1$; let's call $X$ a bounded discrete derivative (BDD) matrix. BDD matrices are extremely special, and we won't be too surprised if their Min-Plus product can be done in truly subcubic time. A truly subcubic algorithm for Min-Plus product for BDD matrices would be useful, for instance, for finding a Maximum Subarray of a matrix with small entries, a well-studied problem with many applications.

Unfortunately, however, BDD matrices are not bounded difference matrices, and the Bringmann et al. algorithm does not apply to them. Even the general framework devised by Bringmann et al. cannot be used as is. (We will go into more details in a bit.)  
The main goal of this paper is to modify Bringmann et al.'s framework to support less structured matrices, and to apply the new framework to obtain the first substantial improvements on the complexity of several studied problems.

\subsection{Our results}

\subsubsection{New Subcubic Min-Plus Products.}
Our main result is a new algorithm for Min-Plus product for less structured matrices. We begin with defining the structure needed.

\begin{definition}[$W$-approximate rank]
For an $n \times n$ integer matrix $M$, its $W$-approximate rank is defined as $$\min\left\{\text{rank}(X): X \in \mathbb{Z}^{n \times n}, |X-M|_{\infty} \le W \right\}.$$
\end{definition}
This $W$-approximate rank definition resembles the $\eps$-approximate rank definition of Alon et al.~\cite{alonapproxrank}. The difference is that we require the matrix $X$ to be have integer entries. 

Let $\delta>0$ be a constant and let $W\geq 0$ be an integer.
Consider an $n\times n$ integer matrix $B$ with the following structure. First partition $B$ into $n^\delta\times n^\delta$ blocks $B^{a,b}$ (containing the entries $B_{i,j}$ where $i\in (an^\delta,(a+1)n^{\delta}], j\in (bn^\delta,(b+1)n^{\delta}]$).
We require that every block submatrix $B^{a,b}$ has $W$-approximate rank at most $O(1)$. 

Our main result is:
\begin{theorem}\label{thm:low_rank}
Let $\delta\in (0,1]$. The Min-Plus product of two $n\times n$ matrices $A$ and $B$ whose entries are $\polylog n$ bit integers, and $B$ has all its $n^\delta\times n^\delta$ blocks of $W$-approximate rank at most $d$ for $1\leq d=O(1)$ can be computed in time 
\[\tilde{O}\left(n^{3-\frac{\delta}{\lfloor (d+1)/2 \rfloor}} + W^{1/4} n^{(9+\omega)/4}\right).\]
\end{theorem}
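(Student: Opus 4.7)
The approach combines a geometric algorithm exploiting the per-block low-rank structure of $B$ with an Alon--Galil--Margalit~\cite{AlonGM97} style rectangular matrix multiplication for the $W$-bounded residual error.

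First, for each block $B^{a,b}$, use the hypothesis to fix an integer decomposition $B^{a,b}=X^{a,b}+E^{a,b}$ with $\text{rank}(X^{a,b})\le d$ and $\|E^{a,b}\|_\infty\le W$, and factor $X^{a,b}=U^{a,b}(V^{a,b})^{\top}$ with $U^{a,b},V^{a,b}\in\mathbb{Z}^{n^\delta\times d}$. For a fixed row $i$ and block $(a,b)$,
\[
\min_{k}\bigl(A_{i,k}+X^{a,b}_{k,j}\bigr)=\min_{k}\bigl(A_{i,k}+\langle U^{a,b}_{k,\cdot},V^{a,b}_{j,\cdot}\rangle\bigr),
\]
which is the value at $V^{a,b}_{j,\cdot}$ of the lower envelope of $n^\delta$ affine functions on $\mathbb{R}^d$. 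By the upper-bound theorem, this envelope has combinatorial complexity $O(n^{\delta\lfloor(d+1)/2\rfloor})$; building it with point-location support and, crucially, sharing construction work across the $n$ rows $i$ and the $n^{1-\delta}$ block rows $a$ (e.g.\ via Chan-style offline range searching or batched rectangular matrix multiplication), one obtains an approximation $\tilde C$ with $|\tilde C_{i,j}-C_{i,j}|\le W$ in time $\tilde O(n^{3-\delta/\lfloor(d+1)/2\rfloor})$.

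Next, I refine $\tilde C$ into the exact $C$ using the $W$-error budget. Any true minimizer $k^\ast$ of $C_{i,j}$ satisfies $A_{i,k^\ast}+X^{\cdot}_{k^\ast,j}\le\tilde C_{i,j}+2W$, so candidate $k$'s lie in a depth-$2W$ slab above the lower envelope and can be enumerated implicitly via range queries on its arrangement. Shifting each row of $A$ and column of $B$ by $\tilde C_{i,j}$ brings the residual entries into $[-O(W),O(W)]$; the candidate-verification problem then becomes a $W$-bounded Min-Plus product on a structured submatrix. Applying AGM's encoding on a rectangular matrix product of shape $n\times n^{1/2}$ by $n^{1/2}\times n$ (cost $\tilde O(Wn^{(\omega+3)/2})$) and choosing a sampling rate that balances this against the slab width $W$ produces the $\tilde O(W^{1/4}n^{(9+\omega)/4})$ term; summing the two contributions gives the claimed bound.

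The main obstacle is step three. Two subtleties need care: (i) amortizing envelope construction across rows of $A$ so that the low-rank step truly costs $\tilde O(n^{3-\delta/\lfloor(d+1)/2\rfloor})$ rather than the naive $\tilde O(n^{3-2\delta+\delta\lfloor(d+1)/2\rfloor})$ one gets from building an independent envelope per $(i,a,b)$, and (ii) implementing the slab-candidate refinement as a batched matrix product rather than an explicit per-candidate enumeration. Both are resolved by recognizing that the geometric step can be recast as, or at least piggyback on, rectangular matrix multiplications with the right exponent tradeoffs; the $1/4$ on $W$ in the second term reflects a four-way balance among the AGM factor $W$, the rectangular MM cost $n^{(\omega+3)/2}$, the column-sampling rate, and the slab width $W$.
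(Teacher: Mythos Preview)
Your Phase~1 idea --- use the low-rank block structure plus halfspace geometry to get a $W$-approximation $\tilde C$ in time $\tilde O(n^{3-\delta/\lfloor(d+1)/2\rfloor})$ --- is in the right spirit, and indeed the paper does exactly this (via halfspace emptiness queries with binary search, Lemma~\ref{lem:approx_Delta_MM}), so no amortization trick across rows of $A$ is needed: the per-block cost is already $\tilde O(\Delta^{3-1/\lfloor(d+1)/2\rfloor})$, and multiplying by $(n/\Delta)^3$ block triples gives the bound directly.

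The genuine gap is in your refinement step. You write ``shifting each row of $A$ and column of $B$ by $\tilde C_{i,j}$ brings the residual entries into $[-O(W),O(W)]$.'' But $\tilde C_{i,j}$ depends on \emph{both} $i$ and $j$; no row-shift of $A$ and column-shift of $B$ can absorb it, and after any legitimate shift one of the two matrices still has arbitrarily large entries. This is precisely the obstacle the Bringmann et al.\ framework was designed to overcome: one samples a reference column $j^r$ and sets $A^r_{i,k}=A_{i,k}+B_{k,j^r}-\tilde C_{i,j^r}$, $B^r_{k,j}=B_{k,j}-B_{k,j^r}$, so that $A^r$ is $O(W)$-bounded on ``relevant'' $(i,k)$ pairs while $B^r$ remains arbitrary. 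The paper's contribution is then twofold: (i) truncate only $A^r$ (not $B^r$), so one needs a Min-Plus product of a bounded matrix with an \emph{arbitrary} one, handled by the new Theorem~\ref{thm:small_weight_times_large_weight} in $\tilde O(\sqrt{W}\,n^{(3+\omega)/2})$ time via a bucketing argument, and (ii) in Phase~3, enumerate the few remaining uncovered relevant triples using halfspace range \emph{reporting} on the block vectors, at cost $\tilde O(n^{3-\delta/\lfloor(d+1)/2\rfloor}+n^3/\rho)$. Balancing $\rho\sqrt{W}\,n^{(3+\omega)/2}$ against $n^3/\rho$ with $\rho\approx n^{(3-\omega)/4}W^{-1/4}$ is what produces the $W^{1/4}n^{(9+\omega)/4}$ term --- not a rectangular AGM product or a ``four-way balance'' as you suggest. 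Your proposal never introduces the sampling parameter $\rho$ concretely, never explains how to do Min-Plus when only one factor is bounded, and never says how the leftover triples are found; these are the three technical pieces that actually drive the running time.
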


Notice that the matrix $A$ is arbitrary, as long as its entries do not get too huge, larger than $2^{\omega(\polylog n)}$. We would like arithmetic operations on the matrix entries to take $\tilde{O}(1)$ time, so that this entry size is not much of a restriction. The algorithm can handle larger entries as well. If the entries of $A$ and $B$ are $\beta$-bit integers, the algorithm gets a $\tilde{O}(\beta)$ overhead.

The running time of the algorithm is truly subcubic for any constant $d$ and any constant $\delta>0$, as long as $W=O(n^{3-\omega-\eps})$ for some $\eps>0$.

Let us discuss first why Theorem \ref{thm:low_rank} subsumes all previous results on truly subcubic structured Min-Plus product. We only need to show that a bounded differences matrix also has constant $W$-approximate rank blocks, as by the discussion in \cite{BringmannGSW16}, all other known cases of truly subcubic Min-Plus can be reduced to multiplying a bounded differences matrix with an arbitrary integer matrix. Suppose that $B$ is such that for every $i$ and $j$, $|B_{i,j}-B_{i,j+1}|\leq Q$ for small $Q$. Now consider the $n^\delta\times n^\delta$ sub-blocks $B^{a,b}$ of $B$ (for any choice of $\delta>0$). All columns of $B^{a,b}$ differ entrywise from the first column $B^{a,b}(1)$ by at most $Qn^\delta$. Thus, if we consider the rank one matrix that has $n^\delta$ columns identical to $B^{a,b}(1)$, we see that $B^{a,b}$ has $Qn^\delta$-approximate rank one. Hence by Theorem~\ref{thm:low_rank}, we get that for any $Q=O(n^{3-\omega-\eps})$ for $\eps>0$, we can pick $\delta=\eps/2$ and we'll get a truly subcubic time algorithm to Min-Plus multiply an arbitrary integer matrix $A$ by $B$.

Theorem~\ref{thm:low_rank} is very general and can handle much more than just bounded difference matrices. For instance, it is not hard to see that the aforementioned BDD matrices have constant $W$-approximate rank blocks, but also many other structured instances can be solved using Theorem~\ref{thm:low_rank}, as we will see in our applications. 

To prove Theorem~\ref{thm:low_rank} we modify the Min-Plus framework of Bringmann et al. \cite{BringmannGSW16} and combine it with a result from computational geometry on halfspace intersection reporting. 

We will give a brief overview on how we modify the Bringmann et al. framework.
The framework from \cite{BringmannGSW16} for computing the Min-Plus product $C$ of integer matrices $A$ and $B$ consists of Phase $1$, Phase $2$ and Phase $3$. 

Phase $1$ computes a matrix $C'$ which is close in $\ell_\infty$ norm to the desired output $C$. This phase is not hard to perform for the type of matrices we are considering; also, as shown by Bringmann et al., often this Phase can be avoided by scaling, and the real difficulty is in Phase 2, and especially Phase 3.

Phase $2$ iteratively takes random samples of rows of $A$ and columns of $B$, and repeatedly creates new matrices $\tilde{A}$ and $\tilde{B}$ whose entries are clever linear combinations of entries of $A,B$, the sampled row and column, and $C'$, so that most entries of the Min-Plus product $C$ of $A$ and $B$ can be easily computed from the Min-Plus products $\tilde{C}$ of $\tilde{A}$ and $\tilde{B}$ in $O(n^2)$ time. To perform Phase $2$ efficiently, Bringmann et al. replace any entries of $\tilde{A}$ and $\tilde{B}$ that are larger than some $M$ by $\infty$ and use the $\tilde{O}(Mn^\omega)$ time algorithm \cite{AlonGM97} to perform the Phase 2 Min-Plus products. By removing the large entries, some entries of $C$ will not be recoverable from the computed Min-Plus products $\tilde{C}$ in the Phase 2 iterations. Bringmann et al. show that at most a truly subcubic number of sums $A_{i,k}+B_{k,j}$ that might be close to the Min-Plus product entries will be missed in the computation.

Phase $3$ strives to recover the parts of the output matrix $C$ that are missed by Phase $2$. We know that at most a truly subcubic number of relevant sums $A_{i,k}+B_{k,j}$ need to be considered. If we knew which triples $i,k,j$ are involved in such sums, then we could finish the Min-Plus product in truly subcubic time by computing the sums explicitly. However, the main difficulty lies exactly in finding these triples. In particular, in the case of BDD matrices, there doesn't seem to be enough structure for one to be able to recover the remaining relevant triples in Step $3$ efficiently.

One of the main insights in this work is that one can offload more work to Phase 2 so that in Phase 3 there is enough structure left to recover the remaining relevant triples efficiently. In particular, instead of removing the large entries from both $\tilde{A}$ and $\tilde{B}$ in Phase 2, we only remove them from $\tilde{A}$. Then intuitively, Phase 2 does more work, and it turns out that in Phase 3, in truly subcubic time, one can find the remaining triples that one needs to consider to compute the entire Min-Plus product of $A$ and $B$, using a halfspace intersection reporting data structure from computational geometry.

However, now in Phase 2 we need to compute the Min-Plus product of an {\em arbitrary} integer matrix with a matrix with $\infty$ entries and finite entries bounded by $M$. This type of Min-Plus product is no longer known to be in $\tilde{O}(Mn^\omega)$ time. An $\tilde{O}(Mn^{(3+\omega)/2})$ time algorithm follows from prior work (e.g. \cite{Yuster09}, Lemma 3.3). We are able to improve the dependence on $M$, thus allowing for a faster truly subcubic final algorithm for Theorem~\ref{thm:low_rank}.

\begin{theorem}\label{thm:small_weight_times_large_weight}
The $(\min,+)$-product of two $n\times n$ integer matrices $A$ and $B$, where $A$ has entries in $\{-M,\ldots, M\}\cup\{\infty\}$ for some $M \ge 1$ and $B$ is arbitrary can be computed in $\tilde{O}(\sqrt{M}n^{(3+\omega)/2})$ time.
\end{theorem}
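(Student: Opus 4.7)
My plan is to adapt Matousek's dominance-product framework---which solves the special case $A\in\{0,\infty\}^{n\times n}$ against an arbitrary $B$ in $\tilde O(n^{(3+\omega)/2})$ time---with a value-grouping step that contributes only a $\sqrt M$ factor, not the $M$ factor one would pay by running Matousek once per integer value of $A$. I would sort each column of $B$ (writing $\sigma_j$ for the permutation and $b^j_\ell=B_{\sigma_j(\ell),j}$ for the sorted entries), partition each sorted column into blocks of size $s$ to be fixed later, and partition the value range $\{-M,\ldots,M\}$ into $O(\sqrt M)$ bins $G_1,G_2,\ldots$ of width $\sqrt M$; for each bin $g$ form the Boolean indicator matrix $A^{(g)}$ with $A^{(g)}_{i,k}=\mathbf{1}[A_{i,k}\in G_g]$. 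Since every finite entry of $A$ has absolute value at most $M$, the minimizer $k^{\star}(i,j)$ must satisfy $B_{k^{\star},j}\le b^j_1+2M$, so only the early blocks of each column can contain candidates.

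The Matousek phase computes, for each bin $g$ and block index $b$, the Boolean product $A^{(g)}\cdot P^b$ with $P^b_{k,j}=\mathbf{1}[k\text{ is in sorted block }b\text{ of column }j]$ to detect, for every $(i,j)$, whether block $b$ of column $j$ contains a witness $k$ with $A_{i,k}\in G_g$; an $O(s)$ in-block scan per triple then recovers the smallest sorted rank $\ell_{i,j,g}$ of such a witness. Over the $O(\sqrt M)$ bins and $n/s$ blocks the total cost is $\tilde O(\sqrt M\, n^{1+\omega}/s + \sqrt M\, n^{2}s)$, which balances to $\tilde O(\sqrt M\, n^{(3+\omega)/2})$ at $s=n^{(\omega-1)/2}$.

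The witness only yields a lower bound $L_g(i,j)=g\sqrt M + b^j_{\ell_{i,j,g}}$ on bin $g$'s contribution to $C_{i,j}$, underestimating by at most $\sqrt M-1$ since $A$ varies by $\sqrt M$ within a bin, and hence $C_{i,j}\in[\min_g L_g(i,j),\,\min_g L_g(i,j)+\sqrt M)$; I would flag as candidates the bins with $L_g(i,j)$ inside that window. For each candidate I would recover the exact bin contribution by scanning the indices $k\in\{k:A_{i,k}\in G_g\}$ whose $B_{k,j}$ lies in $[b^j_{\ell_{i,j,g}},\,b^j_{\ell_{i,j,g}}+\sqrt M-1]$---a range that by the standard exchange argument must contain every bin-$g$ optimizer. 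The main obstacle is controlling the cost of this exact-recovery step: one must argue that the total number of candidate bins over all $(i,j)$, together with the per-candidate scan cost, aggregates to $\tilde O(\sqrt M\, n^2 s)$, matching the Matousek phase and preserving the final bound $\tilde O(\sqrt M\, n^{(3+\omega)/2})$.
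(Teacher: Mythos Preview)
Your Matousek phase---binning the finite entries of $A$ into $O(\sqrt{M})$ value-groups and running a dominance-style Boolean product against the sorted blocks of each column of $B$---is sound and does produce, for every $(i,j,g)$, the lower bound $L_g(i,j)$ in the stated time. The approach, however, is genuinely different from the paper's. The paper does not bin $A$ at all; it buckets each sorted column of $B$ by \emph{position} into $n/\Delta$ groups of size $\Delta$, classifies each bucket as ``small'' (value range $\le 2M$) or ``large'' ($>2M$), handles small buckets exactly by shifting into $[-M,M]$ and invoking the $\tilde{O}(Mn^\omega)$ bounded-entry Min-Plus, and handles large buckets via the key structural observation that among large buckets containing some $k$ with $A_{i,k}<\infty$, only the \emph{two} with smallest values can possibly supply the minimizer---so brute-forcing those two costs $O(\Delta)$ per $(i,j)$. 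Balancing $Mn^{1+\omega}/\Delta$ against $n^2\Delta$ at $\Delta=\sqrt{M}\,n^{(\omega-1)/2}$ gives the theorem.

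The obstacle you flag in your exact-recovery step is a genuine gap, not a routine detail. Take $A_{i,k}=0$ and $B_{k,j}=0$ for all $k$: then every one of the $n$ indices satisfies $A_{i,k}+B_{k,j}\in[L^*,L^*+\sqrt{M})$, all of them lie in the single scan interval $[b^j_{\ell_{i,j,0}},\,b^j_{\ell_{i,j,0}}+\sqrt{M})$, and your per-$(i,j)$ recovery cost is $\Theta(n)$. More generally, your intervals control only the \emph{value} range of the relevant $B_{k,j}$ (width $\sqrt{M}$ per bin, $O(M)$ over all candidate bins), never the number of sorted positions falling in that range; when $B$'s column values cluster, that count is unbounded, and filtering by $A_{i,k}\in G_g$ does not help. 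The paper's small-bucket mechanism is exactly what absorbs the clustered case---those indices land in small buckets and are resolved by bounded Min-Plus rather than by scanning---while the two-large-buckets argument caps the residual scanning at $O(\Delta)$ per $(i,j)$. Your scheme lacks an analogue of either device, so the aggregation to $\tilde{O}(\sqrt{M}\,n^2 s)$ does not hold as stated.
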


\subsubsection{Applications.}
To highlight the power of our new Min-Plus algorithm, we apply Theorem~\ref{thm:low_rank} to obtain the first improvements in the running times for several problems: a new geometric version of APSP, a batch range mode problem considered by Chan et al.~\cite{chan2014linear} and the Maximum Subarray problem. 

\paragraph{Geometric APSP.} As we discussed earlier, typically, an algorithm for a structured version of Min-Plus product implies an algorithm for a structured version of APSP. An almost immediate consequence of Theorem~\ref{thm:low_rank} is that APSP for graphs whose generalized adjacency matrix has $n^\delta\times n^\delta$ blocks of constant $W$-approximate rank and whose entries are $\polylog n$ bit integers can be solved in truly subcubic time when $\delta>0$ and $W\leq O(n^{3-\omega-\eps})$ for some $\eps>0$.

The proof is fairly standard: iterate the Min-Plus product of Theorem~\ref{thm:low_rank} $L$ times, where in the $i$th iteration $B$ is the generalized adjacency matrix of the graph and $A$ is the matrix computed in the $(i-1)$-th iteration (in the first iteration $A=B$). Then in the $L$th iteration one has computed the shortest paths in the graph using at most $L$ edges. To handle the paths longer than $L$ one computes SSSP from a random sample of $\tilde{O}(n/L)$ vertices, and $L$ is chosen to balance the running times.

Let us discuss what the graphs that we can handle look like: 
Define a $(W,d,\delta)$-geometrically weighted clustered graph, $(W,d,\delta)$-GWC for short as follows. $G=(V, E)$ is $(W,d,\delta)$-GWC if
\begin{itemize}
\item $V$ is partitioned into $t = n^{1-\delta}$ subsets $V_1, V_2, \ldots, V_t$ of size $O(n^\delta)$,
\item for every $i,j\in \{1,\ldots,t\}$, each $v \in V_i$ is assigned a $d$-dimensional integer vector $p^{i,j}(v)$, and each $u \in V_j$ is assigned a $d$-dimensional integer vector $q^{i,j}(u)$, and
\item for $v \in V_i$,  $u \in V_j$, $|w(v,u)-p^{i,j}(v)^T q^{i,j}(u)|\leq W.$ In other words, the edge weights in $V_i\times V_j$ are determined by a matrix whose $W$-approximate rank is at most $d$.
\end{itemize}

Notice that $(W,d,\delta)$-GWC graphs can simulate a lot of structure. For instance, imagine that each vertex $j$ is represented by an integer $x_j$, and the weights are determined by some degree $d$ (for $d=O(1)$) polynomial function $p$ of $x_i$ and $x_j$, up to an error at most $W$. Then, the weights can be represented (up to noise at most $W$ in each entry) with inner products of vectors $v_i$ and $v'_j$ of length $d^2$, where $v_i[a,b]$ is the monomial of $p(x'_i,x'_j)$ corresponding to $(x_i')^{a}\cdot (x_j')^{b}$ with the corresponding coefficient coming from $p$, evaluated at $x'_i=x_i$ and $x'_j=1$, and $v'_j[a,b]$ is $x_j^b$;
then we get that $v_i^T v'_j=p(x_i,x_j)$. A similar argument can be carried over if the $x_i$ are $O(1)$ dimensional vectors and $p$ is a polynomial in the entries of $x_i$ and $x_j$.

In \cite{chan2010more}, Chan studied a related version of geometrically weighted APSP where the weights between two vertices can be arbitrary algebraic functions, instead of just dot products between two vectors or polynomials. We remark that if we replace the geometric data structure that our Theorem~\ref{thm:low_rank} uses (Theorem \ref{thm:geo}) with the partition theorem in \cite{agarwal1994range}, we can achieve APSP for arbitrary algebraic functions as in \cite{chan2010more}, as long as the produced edge weights are integers. 
Moreover, our algorithm allows the edge weights to disagree with the function of their endpoints by an additive error, while the algorithm in \cite{chan2010more} requires the edge weights to exactly agree with the function.
In other words, in the case of integer edge weights, we obtain a more powerful geometric APSP algorithm.

\paragraph{Batch Range Mode.}
Given a sequence $a$ of length $n$, the range mode query on a range $[l,r]$ asks for the frequency of the most frequent element in the subsequence between the $l$-th and $r$-th element of $a$. 
Chan et al. \cite{chan2014linear} designed a linear space data structure that answers any range mode query in $\tilde{O}(\sqrt{n})$ time. Because the preprocessing step of the data structure is fast, this implies a $\tilde{O}(n^{1.5})$ time algorithm for the {\em batch} range mode problem in which one is given a sequence and $n$ range mode queries to answer in batch. 

Chan et al. \cite{chan2014linear} showed that 
any
combinatorial algorithm for the batch range mode problem running in $O(n^{1.5-\eps})$ time for $\eps>0$ would imply an $O(n^{3-\delta})$ time  combinatorial algorithm for $\delta>0$ that computes the product of two $n$ by $n$ Boolean matrices. This suggests that it might be hard to find such a combinatorial algorithm for batch range mode, as Boolean matrix multiplication is often conjectured to require $n^{3-o(1)}$ time using a combinatorial algorithm (see e.g. \cite{vsurvey}). This leads to a natural question: if we do not limit to combinatorial algorithms, what should the complexity of batch range mode be? Prior to this work, no noncombinatorial $n^{1.5-o(1)}$ lower bounds (even conditional ones), and no $O(n^{1.5-\eps})$ time (for $\eps>0$) algorithms were known to exist. 

As another application of Theorem~\ref{thm:low_rank} we obtain a $\tilde{O}(n^{1.4854})$ time algorithm for batch range mode, giving the first ever $O(n^{1.5-\eps})$ time (for $\eps>0$) algorithm for the problem. Note that in this application, we use $d=1$ in Theorem~\ref{thm:low_rank}, so each block of matrix $B$ is a bounded difference matrix. 
Thus Bringmann et al.'s algorithm suffices to give an $O(n^{1.5-\eps})$ (for $\eps > 0$) time algorithm for batch range mode.

\paragraph{Maximum Subarray.}
In the Maximum Subarray problem, one is given a real valued square matrix and is asked to find the contiguous submatrix of maximum entry sum. First studied by Bentley~\cite{bentley84alg}, the problem has many applications, for instance in graphics (see \cite{tamaki1998algorithms}) and in databases \cite{databases-app1,databases-app2,databases-app2j,databases-app3,databases-app4}.

The Maximum Subarray problem can be generalized to arbitrary dimension $d$: here one is given a $d$-dimensional grid (or tensor) with $n$ coordinates in each dimension (i.e. $[n]^d$), each point in the grid has a real value, and one is asked to return the contiguous subgrid of maximum entry sum.
In 1D, Kadane's algorithm (presented in \cite{bentley84alg}) achieves a linear, $O(n)$ running time. Bentley~\cite{bentley84} showed how to use Kadane's algorithm to solve the $2$D variant of the Maximum Subarray problem in $O(n^3)$ time; the same approach gives an $O(n^{2d-1})$ time algorithm, ``Kadane's algorithm'', for the $d$ dimensional version for all $d$. Tamaki et al.~\cite{tamaki1998algorithms} and Takaoka~\cite{Takaoka02} showed how to use divide-and-conquer to efficiently reduce the $2$D Maximum Subarray problem on an $n\times n$ grid to the Min-Plus product of two $n\times n$ matrices. Using the fastest APSP algorithm to date by Williams~\cite{williams2014faster}, one obtains the fastest 2D Maximum Subarray algorithm to date, running in $n^3/2^{\Theta(\sqrt{\log n})}$ time. This algorithm can be used to give the fastest known running time  $n^{2d-1}/2^{\Theta(\sqrt{\log n})}$ for the $d$-dimensional version of the problem as well.

In recent years, fine-grained complexity has yielded conditional lower bounds for Maximum Subarray. Backurs et al.~\cite{backurs2016tight} and Vassilevska W. and Williams~\cite{vw10j} showed that an $O(n^{3-\eps})$ time algorithm for $2$D Maximum Subarray for $\eps>0$ would imply an $O(n^{3-\eps'})$ time algorithm for Min-Plus product (and hence APSP), for $\eps'>0$. Together with the reductions of \cite{tamaki1998algorithms,Takaoka02}, this implies that the $2$D Maximum Subarray problem is subcubically equivalent to APSP. One of the main hardness hypotheses of fine-grained complexity is that APSP requires $n^{3-o(1)}$ time in graphs with integer weights (in the word RAM model with $O(\log n)$ bit words). Under this hypothesis, the best known algorithms for $2$D Maximum Subarray are essentially optimal, up to $n^{o(1)}$ factors, for arbitrary integer matrices.

An intriguing question is whether the $2$D Maximum Subarray problem can be solved in truly subcubic, $O(n^{3-\eps})$ time for $\eps>0$ when the entries of the input matrix are small integers in absolute value. Such an algorithm would be very interesting in practice, as the matrix values often represent such small discrete values.

Due to the equivalence between Min-Plus product and Maximum Subarray and since Min-Plus product can be solved in truly subcubic time when the matrix entries are small integers, it stands to reason that a truly subcubic algorithm might exist for the small entry Maximum Subarray problem as well. Unfortunately, the known reductions from Maximum Subarray to Min-Plus product blow up the matrix entries, so that even if the maximum subarray entries are in $\{-1,0,1\}$, the resulting matrices whose Min-Plus product we want to compute might have entries that are quadratic in $n$. Thus, one cannot simply use the known faster algorithms for small entry Min-Plus product to speed-up the Maximum Subarray problem with small entries.  On the lower bound end, there doesn't seem to be a way to take an instance of Min-Plus product with arbitrarily large entries and to create a maximum subarray instance with small entries. Thus, there is no obvious way to show that the small entry case is hard.

We show that Theorem~\ref{thm:low_rank} can be used to obtain a  truly subcubic algorithm for $2$D Maximum Subarray with bounded entries.

Examining
Tamaki et al. and Takaoka's reduction of Maximum Subarray to Min-Plus product, it can be seen that starting with a maximum subarray instance with entries in $\{-M,\ldots,M\}$,
one obtains $n\times n$ matrices $A$ and $B$ that are BDD as described before:
\[\forall X\in \{A,B\},\forall i,j\in [n-1],~\left|X[i,j]+X[i+1,j+1]-X[i,j+1]-X[i+1,j]\right|\leq M.\]
As BDD matrix Min-Plus product is a special case of Theorem~\ref{thm:low_rank} we immediately obtain a truly subcubic time algorithm for Maximum Subarray for matrices with entries bounded in absolute value by $O(n^{0.62})$.

\paragraph{Conditional lower bounds for $d$-Dimensional Maximum Subarray.}
Backurs et al.~\cite{backurs2016tight} showed that the $d$-Dimensional Maximum Subarray problem requires $n^{3d/2-o(1)}$ time (in the word-RAM model of computation) under the following popular hardness assumption (see e.g. \cite{vsurvey}):

\begin{hypothesis}[Max-Weight $k$-Clique Hypothesis]
In the word-RAM model with $O(\log n)$ bit words, there is no $O(n^{k-\eps})$ time algorithm for $\eps>0$ that can find a $k$-Clique of maximum weight in a given $n$-node graph with edge weights in $\{-n^{ck},\ldots,n^{ck}\}$ for large enough constant $c$.\end{hypothesis}  

The fastest known algorithm for $d$-Dimensional Maximum Subarray runs in $n^{2d-1-o(1)}$ time which is much higher than the Backurs et al.~\cite{backurs2016tight} conditional lower bound.
A natural question is thus, is there a faster algorithm for $d>2$, or can the conditional lower bounds be improved?

Our first hardness result is an improvement of the lower bound of Backurs et al., showing that Kadane's algorithm for $d$-Dimensional Maximum Subarray is conditionally tight:

\begin{theorem}\label{thm:subarray_lowerbound}
Under the Max-Weight $k$-Clique Hypothesis, in the word-RAM model with $O(\log n)$ bit words, the $d$-Dimensional Maximum Subarray problem requires $n^{2d-1-o(1)}$ time.
\end{theorem}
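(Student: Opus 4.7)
The plan is to prove Theorem~\ref{thm:subarray_lowerbound} via a fine-grained reduction from Max-Weight $k$-Clique to $d$-Dimensional Maximum Subarray. Taking $k = 2d-1$ in the hypothesis, I would reduce Max-Weight $(2d-1)$-Clique on $n$-vertex graphs to an instance of $d$-Dim Max Subarray on an $n \times \cdots \times n$ grid, so that any $O(n^{2d-1-\epsilon})$ algorithm for the latter would yield an $O(n^{(2d-1)-\epsilon})$ algorithm for Max-Weight $(2d-1)$-Clique, contradicting the hypothesis. The edge weights live in $\{-n^{c(2d-1)}, \ldots, n^{c(2d-1)}\}$, so the constructed matrix entries will also be polynomially bounded and fit in $\polylog n$ bits.

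The encoding exploits the $2d$ corner coordinates of a $d$-dim subrectangle $\prod_{j=1}^{d}[a_j, b_j]$: I would pin one coordinate (say $a_1 = 0$, enforced by placing large positive ``guard'' contributions along the slice $x_1 = 0$ so that the optimal subrectangle is forced to include this slice) and identify the remaining $2d-1$ coordinates with the sought clique vertices via $v_1 \mapsto b_1$ and, for $j = 2, \ldots, d$, $v_{2j-2} \mapsto a_j$, $v_{2j-1} \mapsto b_j$.

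The $d$-dim matrix $M$ is then built so that the sum over any valid subrectangle equals the clique weight $\sum_{a<b} w(v_a, v_b)$ plus a large constant. Expanding the subrectangle sum via the signed-prefix-sum identity (since $a_1$ is pinned, half of the $2^d$ corners vanish, leaving $2^{d-1}$ signed terms), each \emph{cross-dimension} pair weight $w(v_a, v_b)$ with $d(v_a) \ne d(v_b)$ appears at one or more of the surviving corners and can be realized by adding per-pair 2D contributions supported on only the two relevant dimensions, following the standard signed-prefix-sum trick used by Backurs et al.\ and Vassilevska W.\ and Williams in the $d=2$ case.

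The main obstacle is the $d-1$ ``same-dimension'' pairs $(v_{2j}, v_{2j+1})$ (both endpoints in dimension $j+1$), whose bivariate weights cannot be realized by signed prefix sums confined to a single dimension. To handle them I would generalize the classical Max-Weight Triangle to 2D Max Subarray reduction, which already resolves the single same-dimension pair at $d=2$ via a Min-Plus-product-style block construction that inflates the relevant side length by a constant factor. Applying the analogous local augmentation in each dimension $j = 2, \ldots, d$ and verifying that these per-dimension augmentations compose cleanly across dimensions (without introducing uncontrolled cross-terms that would corrupt the cross-dimension contributions), together with ruling out spurious subrectangles via large-constant padding and the guard slice, is the core technical work of the proof.
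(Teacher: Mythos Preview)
Your plan correctly isolates the crux --- the $d-1$ ``same-dimension'' pairs $(a_j,b_j)$ --- but it stops exactly where the real difficulty begins. You propose to handle each such pair by a local per-dimension gadget in the style of the $2$D triangle reduction and then ``verify that these per-dimension augmentations compose cleanly.'' That composition is not a bookkeeping detail; it is the entire obstacle. A block gadget that encodes $w(a_j,b_j)$ necessarily modifies the prefix-sum structure along dimension $j$, and those modifications propagate into every corner term that also carries the cross-dimension contributions you set up earlier. Making $d-1$ such gadgets coexist without corrupting each other or the cross terms is precisely what the prior Backurs--Dikkala--Tzamos construction could not do, which is why their bound stalls at $n^{d+\lfloor d/2\rfloor}$ rather than $n^{2d-1}$. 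Your outline gives no mechanism for the composition, so as written the plan does not improve on the known $3d/2$ bound.

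The paper avoids this obstacle by \emph{not} encoding pairwise edge weights directly into prefix sums. Instead it passes through two intermediate problems. First (Lemma~\ref{lem:clique_to_hyper}) it reduces Max-Weight $(2d-1)$-Clique to a \emph{Two-sided $d$-Uniform Max-Weight Hyperclique} on $2d$ vertex parts: the first $2d-1$ parts are copies of the clique parts, and the extra part $U_{2d}$ redundantly encodes the XOR $s_{d+1}\oplus\cdots\oplus s_{2d-1}$ of the chosen indices in the second half. The problematic edges $w(v_i,v_{i+d})$ --- exactly your same-dimension pairs --- are then absorbed into $d$-hyperedges that use $u_i$ together with vertices from the \emph{other} second-half parts and $U_{2d}$, bypassing the need to ever place both endpoints in the same coordinate. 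Second (Lemma~\ref{lem:hyper_to_central}) the two-sided $d$-uniform hyperclique maps naturally to the Central Maximum Subarray Sum problem, since a $d$-hyperedge on one vertex per side is exactly a corner of a central box; Backurs et al.'s reduction then finishes the job. The XOR-index trick is the missing idea in your proposal.
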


We were able to show that the $2$D Maximum Subarray problem can be solved faster when the matrix entries are bounded. One might wonder whether such an improvement is possible for $d>2$ as well? The simple reduction from $d$-Dimensional Maximum Subarray to $2$-Dimensional Maximum Subarray, unfortunately blows up the entries, and one cannot use the subcubic algorithm that we developed in a straightforward way. While an improvement is still possible for larger $d$, we show under a popular hardness assumption that at best one would be able to save a factor of $n^{1+o(1)}$ over the runtime of Kadane's algorithm.

The hardness assumption we use is the $\ell$-Uniform Hyperclique assumption used in prior works (see e.g. \cite{LincolnWW18,AbboudBDN18}):

\begin{hypothesis}[$\ell$-Uniform $k$-Hyperclique Hypothesis]
Let $k>\ell\geq 3$ be integers.
In the word-RAM model with $O(\log n)$ bit words, there is no $O(n^{k-\eps})$ time algorithm for $\eps>0$ that can find a hyperclique on $k$ nodes in a given $n$-node $\ell$-uniform hypergraph.\end{hypothesis}  

The hypothesis is very believable for a variety of reasons. It is known (see \cite{LincolnWW18}) that the natural extension of the techniques used to solve $k$-clique (in graphs) will not solve $k$-hyperclique in $\ell$-uniform hypergraphs faster than $n^k$. Moreover, there are known
reductions from notoriously difficult problems such as Exact
Weight $k$-Clique (a problem harder than Max Weight $k$-Clique) \cite{AbboudBDN18}, Max $\ell$-SAT and even harder Constrained
Satisfaction  Problems  (CSPs) \cite{Williams05,LincolnWW18}  to $k$-hyperclique in $\ell$-uniform hypergraphs so  that
if  the  hypothesis  is  false,  then  all  of  these  problems  have surprisingly improved algorithms. 

We prove:
\begin{theorem}\label{thm:bounded_subarray_lowerbound}
Fix any $d\geq 3$.
Under the $3$-Uniform $(2d-2)$-Hyperclique Hypothesis, in the word-RAM model with $O(\log n)$ bit words, the $d$-Dimensional Maximum Subarray problem on matrices with entries in $\{-2^{O(d)},\ldots,2^{O(d)}\}$ requires $n^{2d-2-o(1)}$ time.
\end{theorem}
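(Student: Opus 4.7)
The plan is to prove the lower bound by reducing 3-Uniform $(2d-2)$-Hyperclique to $d$-Dimensional Maximum Subarray on matrices with entries bounded in absolute value by $2^{O(d)}$. Given an $n$-vertex 3-uniform hypergraph $H$, the goal is to construct in $\tilde O(n^d)$ time a $d$-dimensional integer tensor $T$ of side length $N = \Theta(n)$ in each dimension with bounded entries, such that the maximum subarray sum of $T$ attains a designated target value if and only if $H$ contains a $(2d-2)$-hyperclique. Any $O(n^{2d-2-\eps})$ algorithm for the constructed instance would then refute the 3-Uniform $(2d-2)$-Hyperclique Hypothesis.

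The encoding uses all $2d$ endpoints of a subarray. For the first $d-1$ dimensions, the two endpoints $(l_i, r_i)$ jointly encode a pair of vertices $(v_{2i-1}, v_{2i})$ of $H$, yielding $2d-2$ vertex choices in total. The remaining endpoints $l_d, r_d$ of the $d$-th dimension are pinned to canonical values $(1,N)$ via a boundary gadget along that dimension, whose per-entry magnitude is $O(1)$ but whose cumulative effect over a length-$\Theta(n)$ dimension strictly prefers the canonical choice by at least $1$. For each of the $\binom{2d-2}{3} = O(d^3)$ triples of vertex-encoding endpoints, a local ``triple-check gadget'' is placed in $T$ so that the box sum picks up an additive $+1$ iff the triple forms a hyperedge of $H$, and $0$ otherwise; this is implemented by inserting $\pm 1$ entries at positions chosen so that, after expanding the subarray sum via the standard prefix-sum decomposition into $2^d$ signed terms, the hyperedge indicator $H[v_i, v_j, v_k]$ appears exactly once in the total with the right sign. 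With a suitable additive constant offset $C$, the maximum subarray sum equals $C + \binom{2d-2}{3}$ iff all $\binom{2d-2}{3}$ triples are hyperedges, i.e.\ iff $H$ contains a $(2d-2)$-hyperclique.

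The main obstacle is to keep entries of $T$ bounded by $2^{O(d)}$ while simultaneously superimposing three overlapping layers of structure: (i) the pinning gadget for $(l_d, r_d)$, (ii) penalties that prevent ``cheating'' by degenerate endpoint choices such as $l_i > r_i$ or endpoints falling outside the vertex range, and (iii) the $\binom{2d-2}{3}$ triple-check gadgets. Because each cell of $T$ is touched by only $\mathrm{poly}(d)$ gadgets and each contributes $O(1)$ magnitude, the entries remain $\mathrm{poly}(d)=2^{O(d)}$, matching the bound in the theorem. The construction adapts the Max-Weight-Clique-to-Maximum-Subarray reduction of Backurs et al.~\cite{backurs2016tight} to the hyperedge setting: the principal modification is the replacement of pairwise edge checks, which can be distributed over pairs of endpoints in a $2^d$-term prefix-sum expansion, by triple-wise hyperedge checks that must be distributed over triples of endpoints while keeping the signs and multiplicities consistent.
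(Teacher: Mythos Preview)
Your proposal attempts a direct reduction from $3$-Uniform $(2d-2)$-Hyperclique to $d$-Dimensional Maximum Subarray, whereas the paper routes through two intermediate problems: it first reduces to the \emph{Two-sided $d$-Uniform Max-Weight Hyperclique} problem with bounded weights (Lemma~\ref{lem:clique_to_hyper_const}), then to \emph{Central Maximum Subarray Sum} (Lemma~\ref{lem:hyper_to_central}), and finally invokes the Backurs et al.\ reduction to Maximum Subarray. The first step is where the real work happens, and it hinges on an XOR trick: the $d$-th vertex sets $V_d',U_d'$ are not free but are forced (by a penalty of size $O(d^{10})$, still constant in $n$) to equal the XOR of the indices chosen in the first $d-1$ sets on each side. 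This extra constraint is precisely what allows the reduction to encode those $3$-hyperedges of $H$ that touch \emph{both} sides of a single ``dimension pair'' $(V_i,U_i)$.

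Your sketch has a genuine gap exactly at this point. You encode the $2d-2$ clique vertices as the endpoints $l_1,r_1,\ldots,l_{d-1},r_{d-1}$ and claim that a triple-check gadget can make the box sum pick up $H[v_i,v_j,v_k]$ for every triple. But the box sum, written via the prefix-sum identity, is a signed sum of $2^d$ terms $P[c_1,\ldots,c_d]$ where in each coordinate $m$ the argument $c_m$ is either $l_m-1$ or $r_m$, never both. Hence no placement of $\pm 1$ entries in $T$ can make the box sum depend on an arbitrary function of $l_i$ and $r_i$ jointly; the dependence in coordinate $i$ is always of the separable form $g(r_i)-g(l_i-1)$ for some $g$ determined by the remaining coordinates. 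Triples of the type $(v_{2i-1},v_{2i},v_t)$, with two vertices coming from the same dimension $i$, therefore cannot be tested by your gadgets, and there are $\Theta(d^2)$ such triple types. Pinning $(l_d,r_d)=(1,N)$ only makes this worse: it discards the one degree of freedom that the paper exploits (via the XOR encoding) to handle exactly these same-dimension triples. To salvage the outline you would need a mechanism with the same effect as the paper's XOR constraint on the $d$-th coordinate; without it the reduction is incomplete.
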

That is, for any constant $d$, solving the problem in matrices with entries bounded by a constant is $n^{2d-2-o(1)}$-hard.






\section{Preliminaries}

We use $\tilde{O}(f(n))$ to denote $f(n)\polylog n$.
For a matrix $X$, we denote by $X(i)$ the $i$th column of $X$.

The Min-Plus or $(\min,+)$-product of two $n\times n$ matrices $A$ and $B$ is the $n\times n$ matrix $C=A\star B$ with $C[i,j]=\min_k \{A[i,k]+B[k,j]\}$. The All-Pairs Shortest Paths problem (APSP) is given a graph $G=(V,E)$ with integer edge weights $w(\cdot)$, determine for all $u,v\in V$, the shortest path distance $d(u,v)$ from $u$ to $v$. 

We let $\omega$ be the exponent of square matrix multiplication, i.e. the smallest real number such that $n\times n$ matrices can be multiplied in $n^{\omega+o(1)}$ time. It is known that $2\leq \omega<2.373$ \cite{legallmult,vmmult}.

It is known \cite{AlonGM97} that for any $M \ge 1$, the Min-Plus product of two $n\times n$ matrices with entries in $\{-M,\ldots,M\}\cup\{\infty\}$ can be computed in time $\tilde{O}(Mn^\omega)$.

Our algorithm will use the following efficient data structure for half-space query in $\R^d$ for constant $d$. 
\begin{theorem}[\cite{matousek1992reporting}]\label{thm:geo}
For any constant $d \ge 2$, there exists a data structure that supports
\begin{itemize}
\item Given a set $P$ of $n$ points in $\mathbb{R}^d$, preprocess them in $\tilde{O}(n)$ time;
\item Given a halfspace $\lambda = \{x \in \mathbb{R}^d | v^T x \le  b \}$, test whether $|P \cap \lambda| > 0$ in $\tilde{O}(n^{1-1/\lfloor d/2 \rfloor}) $ time.
\item Given a halfspace $\lambda = \{x \in \mathbb{R}^d | v^T x \le  b \}$,  report all points in $P \cap \lambda$ in $\tilde{O}(n^{1-1/\lfloor d/2 \rfloor} + k) $ time, where $k = |P \cap \lambda|$. 
\end{itemize}
\end{theorem}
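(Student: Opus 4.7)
The plan is to use point-hyperplane duality to reduce halfspace range reporting to a problem about arrangements of hyperplanes, and then exploit the upper bound theorem via Matousek's shallow partition machinery. Concretely, dualize each data point $p\in P$ to a hyperplane $p^*$ and the query halfspace $\lambda=\{x:v^Tx\le b\}$ to a point $\lambda^*$; the condition $p\in\lambda$ becomes ``$\lambda^*$ lies on a specified side of $p^*$,'' so reporting $P\cap\lambda$ is equivalent to reporting all dual hyperplanes strictly above (say) the query point $\lambda^*$. The $+k$ term of the query time will come for free from output-sensitive traversal, so the real task is bounding the navigation cost by $\tilde{O}(n^{1-1/\lfloor d/2\rfloor})$.

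For the data structure, I would build a hierarchical partition tree based on Matousek's simplicial partition theorem, but refined by \emph{shallow} cuttings. A naive partition tree with branching factor $r$ already gives simplicial cells of size $O(n/r)$ such that every hyperplane crosses at most $O(r^{1-1/d})$ cells, yielding query time $n^{1-1/d+o(1)}$. To obtain the stronger exponent $1-1/\lfloor d/2\rfloor$, I would replace the partition theorem by a shallow cutting of the $\le k$-level: by the upper bound theorem on convex polytopes, the $\le k$-level of $n$ hyperplanes in $\mathbb{R}^d$ has complexity $O(n^{\lfloor d/2\rfloor}k^{\lceil d/2\rceil})$, and Matousek's shallow cutting lemma triangulates a neighborhood of this level into $O((n/k)^{\lfloor d/2\rfloor})$ simplices whose conflict lists all have size $O(k)$. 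Recursing on the conflict lists yields a tree of near-linear total size, which I would build in $\tilde{O}(n)$ time using a randomized incremental or iterated-reweighting construction.

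The query algorithm locates the cell of the top-level shallow cutting that contains $\lambda^*$, emits (via pre-stored canonical subsets) all hyperplanes that lie entirely above the cell, and then recurses into the conflict list attached to the cell to handle the at most $k$ hyperplanes that straddle it. Setting $k$ so that one level of recursion reduces the problem size appropriately, the recurrence $Q(n)=Q(k)+O((n/k)^{\lfloor d/2\rfloor})$ solves to $Q(n)=\tilde{O}(n^{1-1/\lfloor d/2\rfloor})$, and the emptiness version is obtained by simply stopping as soon as any nonempty canonical subset is discovered. The main obstacle is engineering all three guarantees at once: achieving the sharp query exponent $1/\lfloor d/2\rfloor$ (rather than $1/d$) requires the shallow cutting together with the upper bound theorem, while simultaneously keeping preprocessing $\tilde{O}(n)$ and space $O(n)$ forces a delicate choice of the recursion parameter and careful amortization of the conflict-list storage across levels of the tree, which is precisely the technical content of Matousek's 1992 construction.
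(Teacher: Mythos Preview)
The paper does not prove this statement at all: Theorem~\ref{thm:geo} is quoted verbatim from Matou\v{s}ek~\cite{matousek1992reporting} and used as a black box throughout Section~\ref{sec:upper_bound}. There is therefore nothing in the paper to compare your argument against; the authors simply invoke the data structure and never revisit its internals.

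That said, your sketch does track the ideas behind Matou\v{s}ek's construction---duality, the upper bound theorem, and shallow cuttings/partitions are exactly the ingredients---so as an explanation of \emph{why} the cited bound holds it is on the right track. One technical wrinkle worth flagging: the recurrence you wrote, $Q(n)=Q(k)+O((n/k)^{\lfloor d/2\rfloor})$, is the query-side recurrence only. The matching space recurrence is $S(n)=O((n/k)^{\lfloor d/2\rfloor})\cdot S(k)+O((n/k)^{\lfloor d/2\rfloor}\cdot k)$ once you store a canonical subset per cell, and for $d\ge 4$ no choice of $k$ keeps this linear if canonical subsets are stored explicitly. Matou\v{s}ek's actual argument works in the primal setting via the \emph{shallow partition theorem} (partition the point set into $r$ groups of size $O(n/r)$ so that every halfspace containing at most $n/r$ points crosses $O(r^{1-1/\lfloor d/2\rfloor})$ groups), which sidesteps the storage blow-up because the groups are disjoint by construction. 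Your dual shallow-cutting picture can also be made to work, but it needs an additional idea (implicit canonical subsets, or a more careful cascade of levels) that your sketch does not supply.
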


\section{Improvement over Min-Plus Product with One Bounded-Entry Matrix}
We slightly improve on the dependence on the entry size for computing the Min-Plus product of an arbitrary matrix and one matrix with small entries (absolute value smaller than some $M \ge 1$). Previously, the best algorithm for this runs in $\tilde{O}(Mn^{(3+\omega)/2})$ time.

\begin{proof}[Proof of Theorem \ref{thm:small_weight_times_large_weight}]
Let $\hat{C}$ be an $n \times n$ matrix, the output of our algorithm. Initialize all entries in $\hat{C}$ to $\infty$. 
Let $\Delta$ to be a small polynomial in $n$ that will be determined later. We sort each column $j$ of $B$, and arrange the elements in each column into buckets of size $\Delta$, based on the order of the elements. Specifically, the smallest $\Delta$ elements in column $j$ will be in the first bucket in column $j$, and the second smallest $\Delta$ elements will be in the second bucket, etc. We use $P_{j,\ell}$ to denote the set of row indices $k$ such that $B_{k,j}$ is in the $\ell$-th bucket of column $j$. Let the smallest entry in the $\ell$-th bucket be $S_{j,\ell}$ and let the largest entry in the $\ell$-th bucket be $L_{j,\ell}$. 

Next, for every bucket index $\ell \in [n/\Delta]$, create a matrix $B^\ell$. For the $j$-th column, if $L_{j,\ell} - S_{j,\ell} > 2M$ (large bucket), we set $B^\ell_{k, j}$ to $\infty$ for every $k$; otherwise $L_{j,\ell} - S_{j,\ell} \le 2M$ (small bucket), and we set $B^\ell_{k, j} := B_{k, j} - S_{j,\ell} - M$ for every $k \in P_{j,\ell}$, and set $B^\ell_{k, j}$ to $\infty$ for every $k \not \in P_{j,\ell}$. Notice that when $L_{j,\ell} - S_{j,\ell} \le 2M$, $B^\ell_{k, j} = B_{k, j} - S_{j,\ell} - M \in [-M, M]$ for any $k \in P_{j,\ell}$. 
Thus, we can compute $C^\ell = A \star B^\ell$ in $\tilde{O}(Mn^\omega)$ time since entries of both $A$ and $B^\ell$ are in $\{-M, \ldots, M\} \cup \{\infty\}$. We use $C^\ell_{i, j} + S_{j,\ell} + M$ to update $\hat{C}_{i,j}$. Since for every $k \in P_{j,\ell}$ when $P_{j,\ell}$ is a small bucket, $A_{i,k} + B^\ell_{k,j} + S_{j,\ell} + M = A_{i,k} + B_{k,j}$, we are essentially using $A_{i,k} + B_{k,j}$ to update $\hat{C}_{i,j}$ for every $k \in P_{j,\ell}$, if $P_{j,\ell}$ is a small bucket. Thus, after this part of the algorithm, $\hat{C}_{i,j} = \min_{k \in SB(j)} \{A_{i,k}+B_{k,j} \}$, where $SB(j)$ is the union of indices in small buckets in column $j$. This step takes $\tilde{O}(Mn^{\omega+1}/\Delta)$ time since we compute $O(n/\Delta)$ instances of Min-Plus product of two matrices whose entries are in $\{-M, \ldots, M\} \cup \{\infty\}$.

Thus, for each pair $(i, j)$, we only need to calculate $\min_{k \not \in SB(j)} \{ A_{i,k}+B_{k, j}\}$. In order to compute this, we first need to find the set of large buckets that contain an index $k$ where $A_{i,k} < \infty$. Formally, for each $i, j$, we want to find
$$\{\ell : P_{j,\ell} \text{ is a ``large'' bucket} \text{, and there exists } k \in P_{j,\ell} \text{ such that } A_{i,k} < \infty\}. $$
We can do this in $n/\Delta$ iterations. In each iteration $\ell$, we create a $\{0, \infty\}$-matrix $\bar{A}$ such that $\bar{A}_{i,k} = 0$ if and only if $A_{i,k} < \infty$. We also create a $\{0, \infty\}$-matrix $\bar{B}^\ell$ such that $\bar{B}^\ell_{k,j} = 0$ if and only if $B_{k,j}$ belongs to the $\ell$-th bucket in column $j$. The result $\bar{C}^\ell = \bar{A} \star \bar{B}^\ell$ can be computed in $O(n^\omega)$ time. If $\bar{C}^\ell_{i,j} = 0$, then bucket $P_{j,\ell}$ contains an index $k$ such that $A_{i,k} < \infty$. This step takes $\tilde{O}(n^{\omega+1}/\Delta)$ time since we compute $O(n/\Delta)$ instances of Min-Plus product with entries in $\{0, \infty\}$. 

Naively, for each pair $(i, j)$, we want to enumerate indices in all large buckets $P_{j,\ell}$ that contains an index $k$ where $A_{i,k} < \infty$. However, it is not necessary. Consider three large buckets $\ell_1 < \ell_2 < \ell_3$ (the order here means the entries in bucket $\ell_1$ are smallest, and the entries in bucket $\ell_3$ are largest). Pick any $k_1 \in P_{j,\ell_1}, k_3 \in P_{j,\ell_3}$ such that $A_{i,k_1} < \infty$ and $A_{i,k_3} < \infty$. Note that $A_{i,k_1} + B_{k_1,j} \le M + L_{j,\ell_1}$. Since buckets are ordered, the largest entry in bucket $P_{j,\ell_1}$ is at most the smallest entry in bucket $P_{j,\ell_2}$. Thus, $A_{i,k_1} + B_{k_1j} \le M + S_{j,\ell_2}$. Similarly, $A_{i,k_3} + B_{k_3,j} \ge -M + S_{j,\ell_3} \ge -M+L_{j,\ell_2}$. Since $P_{j,\ell_2}$ is a large bucket, $L_{j,\ell_2}-S_{j,\ell_2} > 2M$, which leads to $A_{i,k_1} + B_{k_1,j} < A_{i,k_3} + B_{k_3,j}$. It means that if we have two buckets $P_{j,\ell_1}$ and $P_{j,\ell_2}$ that each contains an index $k$ where $A_{i,k} < \infty$, all buckets that are larger than them won't give a better candidate $k$.
Therefore, for each $(i, j)$, we only need to enumerate the first two large buckets that contain indices $k$ where $A_{i,k} < \infty$. Thus, it takes $\tilde{O}(n^2\Delta)$ time to cover large buckets. 

In total, the running time of the algorithm is $\tilde{O}(Mn^{1+\omega}/\Delta + n^2 \Delta)$. Setting $\Delta=\sqrt{M} n^{(\omega - 1) / 2}$ gives the claimed $\tilde{O}(\sqrt{M} n^{(3+\omega)/2})$ time. 
\end{proof}






\section{Main Algorithm}\label{sec:upper_bound}


Let $\Delta$ be a positive integer that is a small polynomial in $n$. Assume for simplicity that $n$ is a multiple of $\Delta$. Then we can partition $[n]$ into $n/\Delta$ groups by setting $I(i') = \{i: i' - \Delta < i \le i'\}$ for any $i'$ divisible by $\Delta$. For any $i', j'$ that are multiples of $\Delta$, we can group all entries $A_{i,j}$ where $i \in I(i'), j \in I(j')$ into a sub-matrix of size $\Delta \times \Delta$, thus partitioning $A$ into sub-matrices of size $\Delta\times\Delta$. We can similarly partition $B$ into sub-matrices of size $\Delta \times \Delta$. 

In Theorem~\ref{thm:low_rank_messy} below we will show that if each of the $\Delta\times\Delta$  sub-matrices of $B$ are close in $\ell_\infty$ norm to an $O(1)$-rank matrix, then we can compute $A \star B$ in truly sub-cubic time. In other words, we need the blocks of $B$ to have constant $n^\eps$-approximate rank for small $\eps>0$.

\begin{theorem}\label{thm:low_rank_messy} Let  $A, B$ be
two given $n\times n$ matrices whose entries are $\polylog n$ bit integers.
Let $W$ be a nonnegative integer and let $d\geq 1$ be an integer with $d=O(1)$.
Suppose that for all $k', j'$ multiples of $\Delta$, we can find two $d$ by $\Delta$ integer matrices $X_{k',j'}$ and $Y_{k',j'}$, such that for any $(k, j) \in I(k') \times I(j')$, $\left| B_{k, j} - X_{k',j'}(k)^T Y_{k',j'}(j) \right| \le W$. Then, for any integer $\rho\geq 1$, there exists a $$\tilde{O}(n^3  \cdot \Delta^{-1/\lfloor (d+1)/2 \rfloor} + \rho \sqrt{W} n^{(3+\omega)/2} + n^3/\rho)$$ time algorithm that computes $A \star B$. 
\end{theorem}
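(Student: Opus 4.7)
My plan is to follow the three-phase Min-Plus framework of Bringmann, Grønlund, Saha, and Wegner~\cite{BringmannGSW16}, substituting the halfspace query data structure of Theorem~\ref{thm:geo} in Phase~1 and the new Min-Plus subroutine of Theorem~\ref{thm:small_weight_times_large_weight} in Phase~2. Throughout, write $B_{k,j} = X_{k',j'}(k)^T Y_{k',j'}(j) + E_{k,j}$ with $|E_{k,j}|\le W$ on block $(k', j')$. Phase~1 would produce an approximation $C'$ with $|C'_{i,j} - C_{i,j}|\le W$ by replacing $B$ with its blockwise rank-$d$ surrogate $X^T Y$. For a fixed $(i, k', j')$, minimizing $A_{i,k} + X(k)^T Y(j)$ over $k\in I(k')$ for every $j\in I(j')$ is a minimum of inner products of the $(d+1)$-dimensional points $(A_{i,k}, X(k))$ against the query vectors $(1, Y(j))$. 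Binary searching on the minimum value reduces each query to $\polylog n$ halfspace emptiness tests, each costing $\tilde O(\Delta^{1-1/\lfloor(d+1)/2\rfloor})$ by Theorem~\ref{thm:geo}; summed over $(n/\Delta)^2$ block pairs, $n$ rows $i$, and $\Delta$ queries per pair, Phase~1 would cost $\tilde O(n^3\,\Delta^{-1/\lfloor(d+1)/2\rfloor})$.

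Phase~2 would perform $T = \tilde O(\rho)$ iterations. In each iteration, for every block $(k',j')$, sample a pivot $j^* = j^*_{k',j'}\in I(j')$ uniformly at random and use the decomposition
\[A_{i,k}+B_{k,j} = \bigl[A_{i,k}+X(k)^T Y(j^*)\bigr] + \bigl[X(k)^T(Y(j)-Y(j^*))\bigr] + E_{k,j}\]
for $k\in I(k'),\ j\in I(j')$, where $X,Y$ abbreviate $X_{k',j'},Y_{k',j'}$. Let $\tilde A_{i,k}$ be the first bracket minus $C'_{i,j^*}$ (for normalization), truncated to $\infty$ whenever $|\tilde A_{i,k}|>3W$, and let $\tilde B_{k,j}$ be the second bracket. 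Then $\tilde A$ has entries in $\{-O(W),\ldots,O(W)\}\cup\{\infty\}$ while $\tilde B$ is unrestricted, which is exactly the regime of Theorem~\ref{thm:small_weight_times_large_weight}; after collecting the per-block $(\tilde A,\tilde B)$ into a single $n\times n$ Min-Plus instance via standard block-merging, each iteration computes $\tilde A\star\tilde B$ in $\tilde O(\sqrt W\, n^{(3+\omega)/2})$. A quick verification on the returned argmin then converts the approximate answer to an exact $C_{i,j}$ value.

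The randomized analysis follows~\cite{BringmannGSW16}: if the true argmin $k_{i,j}$ for $(i,j)$ satisfies $A_{i,k_{i,j}}+B_{k_{i,j},j^*}\le C_{i,j^*}+O(W)$ (the ``survives-truncation'' event), then $(i,j)$ is captured by that iteration. I would argue this event has probability $\Omega(1/\rho)$ over uniform $j^*\in I(j')$ for all but an $O(1/\rho)$-fraction of pairs, so that a Chernoff bound leaves at most $n^2/\rho$ uncaptured pairs with high probability after $T=\tilde O(\rho)$ iterations, contributing $\tilde O(\rho\sqrt W\, n^{(3+\omega)/2})$. Phase~3 would then brute-force the residual by directly computing $\min_k A_{i,k}+B_{k,j}$ in $O(n)$ per pair for $O(n^3/\rho)$ total. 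Summing the three phases yields the claimed runtime.

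The main obstacle is Phase~2. Unlike in~\cite{BringmannGSW16}, where the bounded row-differences of $B$ keep $\tilde B$ small enough to use the classical $\tilde O(Mn^\omega)$ small-entry Min-Plus algorithm, our blockwise low-rank setting forces $\tilde B$ to carry the inner-product term $X(k)^T(Y(j)-Y(j^*))$, which can be arbitrarily large; this is exactly why the new subroutine of Theorem~\ref{thm:small_weight_times_large_weight} is needed. The delicate part of the analysis will be quantifying the survives-truncation probability under random $j^*$—translating the blockwise low-rank structure into a useful bound on how the approximate minimizer fluctuates across a single block-column—together with the bookkeeping to reduce the $j'$-dependence of $(\tilde A,\tilde B)$ to a single Min-Plus product per iteration.
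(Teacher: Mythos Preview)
Your Phase~1 matches the paper. The gaps are in Phases~2 and~3.

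\textbf{Phase~2.} Your per-block pivot $j^*_{k',j'}\in I(j')$ makes $\tilde A_{i,k}$ depend on $j'$, which (as you note) obstructs merging into one $n\times n$ product. The paper sidesteps this entirely: it samples a \emph{single} global $j^r\in[n]$ per round and sets $A^r_{i,k}=A_{i,k}+B_{k,j^r}-\tilde C_{i,j^r}$, $B^r_{k,j}=B_{k,j}-B_{k,j^r}$, using the actual entries of $B$ rather than the low-rank surrogate. Now $A^r$ has no $j'$-dependence, and Theorem~\ref{thm:small_weight_times_large_weight} applies directly. Moreover the randomized analysis is over $(i,k)$ pairs, not $(i,j)$ pairs: a pair $(i,k)$ is ``covered'' if some sampled $j^r$ makes $|A^r_{i,k}|\le 3W$, i.e.\ if $(i,k,j^r)$ is weakly relevant. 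If the number of $j$ with $(i,k,j)$ weakly relevant exceeds $n/\rho$, then $(i,k)$ is covered w.h.p.\ after $O(\rho\log n)$ rounds; otherwise $(i,k)$ contributes at most $n/\rho$ weakly relevant triples. The conclusion is that the number of weakly relevant, uncovered \emph{triples} is $O(n^3/\rho)$. Your proposed statement ``probability $\Omega(1/\rho)$ for all but an $O(1/\rho)$-fraction of $(i,j)$ pairs'' does not follow from the block structure and is not what is needed.

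\textbf{Phase~3.} This is the real gap. You propose to brute-force $\min_k(A_{i,k}+B_{k,j})$ on the $O(n^2/\rho)$ ``uncaptured'' pairs, but you have no way to \emph{identify} those pairs: after Phase~2 you only know $\hat C_{i,j}\ge C_{i,j}$ and $|\tilde C_{i,j}-C_{i,j}|\le W$, which cannot certify $\hat C_{i,j}=C_{i,j}$. The paper's Phase~3 does something quite different and is where the low-rank structure is used a second time. One can determine which $(i,k)$ pairs are uncovered (just check $|A^r_{i,k}|$ for all $r$). For each block triple $(i',k',j')$ and each $i\in I(i')$, build the halfspace data structure on the $(d{+}1)$-dimensional points $\{(A_{i,k},X_{k',j'}(k)):k\in I(k'),\ (i,k)\text{ uncovered}\}$; then for each $j\in I(j')$, \emph{report} all points in the halfspace $\{x:(1,Y_{k',j'}(j))^T x\le 2W+\tilde C_{i,j}\}$. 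Every reported $k$ gives a weakly relevant uncovered triple, and every strongly relevant uncovered triple is reported. The query cost is $\tilde O(n^3\Delta^{-1/\lfloor(d+1)/2\rfloor})$ and the total output size is $O(n^3/\rho)$ by the Phase~2 analysis, which is exactly the third term in the runtime.
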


To obtain Theorem~\ref{thm:low_rank} from Theorem~\ref{thm:low_rank_messy}, we set $\rho$ to $\lceil n^{(3-\omega)/4} W^{-1/4}\rceil$ when $W\leq n^{3-\omega}$; otherwise we can run the trivial cubic time algorithm for Min-Plus product.

The algorithm starts with the framework behind the Bringmann et al. algorithm \cite{BringmannGSW16} that computes the $(\min, +)$-product of two matrices with bounded differences.
However, each of the three steps in the framework requires a completely different approach due to the less structured nature of matrix $B$. 
The resulting algorithm is a strong generalization of the algorithm of \cite{BringmannGSW16}. 

In the rest of this section, we will use $C = A\star B$ to denote the desired $(\min, +)$-product, and use $\hat{C}$ as the output of our algorithm. The algorithm contains three phases. In the first phase, we will compute a matrix $\tilde{C}$, such that every entry of $\tilde{C}$ is an additive approximation of the corresponding entry in the desired output $C$. In the second phase, we will compute $\hat{C}$ by calculating the $(\min, +)$-product of some small weight matrices generated by $A, B$ and $\tilde{C}$ using fast matrix multiplication. In the third phase, we will correct all entries of $\hat{C}$ by efficiently enumerating all $A_{ik}+B_{kj}$ that can possibly improve $\hat{C}_{ij}$. 

\subsection{Phase 1: Approximated Min-Plus Product}

For each triple $(i', k', j')$ such that all $i', k', j'$ are multiples of $\Delta$, if we can compute an additive approximation $\tilde{C}^{i',k',j'}$ of the $(\min, +)$-product $A_{I(i'),I(k')} \star B_{I(k'),I(j')}$, then we can, in $O(n^3/\Delta)$ time, compute $\tilde{C}_{i,j}=\min_{k': \Delta | k'} \tilde{C}^{i',k',j'}_{i,j}$ where $i \in I(i'), j\in I(j')$. We will use the geometric data structure from Theorem \ref{thm:geo} to approximate $A_{I(i'),I(k')} \star B_{I(k'),I(j')}$. 

\begin{lemma}\label{lem:approx_Delta_MM}
There exists a $\tilde{O}(\Delta^{3-1/\lfloor (d+1)/2\rfloor})$ time algorithm that computes a $W$-additive approximation $\tilde{C}^{i',k',j'}$ of $A_{I(i'),I(k')} \star B_{I(k'),I(j')}$, for any $i', k', j'$ multiples of $\Delta$. 
\end{lemma}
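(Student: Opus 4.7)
The plan is to reduce this approximate min-plus computation to a family of halfspace-emptiness queries solvable by Theorem~\ref{thm:geo}, lifted into dimension $d+1$. By hypothesis, for every $(k,j) \in I(k') \times I(j')$ we have $|B_{k,j} - X_{k',j'}(k)^T Y_{k',j'}(j)| \le W$. Define
$$
g_{i,j} := \min_{k \in I(k')} \Bigl( A_{i,k} + X_{k',j'}(k)^T Y_{k',j'}(j) \Bigr).
$$
A short two-sided argument ($f(k) \ge g(k) - W$ and $f(k) \le g(k) + W$ pointwise propagates through $\min$) gives $|g_{i,j} - (A_{I(i'),I(k')} \star B_{I(k'),I(j')})_{i,j}| \le W$, so setting $\tilde{C}^{i',k',j'}_{i,j} := g_{i,j}$ is a valid $W$-additive approximation.

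To compute the $g_{i,j}$ fast, I fix a row index $i \in I(i')$ and form the $\Delta$-point set
$$
P_i := \bigl\{\, p_k := (X_{k',j'}(k),\, A_{i,k}) \in \mathbb{R}^{d+1} \;:\; k \in I(k') \,\bigr\},
$$
then preprocess $P_i$ via Theorem~\ref{thm:geo} (in dimension $d+1 \ge 2$) in $\tilde{O}(\Delta)$ time. For each column index $j \in I(j')$ I set the query vector $y_j := (Y_{k',j'}(j),\, 1) \in \mathbb{R}^{d+1}$, so that $\langle p_k, y_j \rangle = A_{i,k} + X_{k',j'}(k)^T Y_{k',j'}(j)$, and thus $g_{i,j} = \min_{p \in P_i} \langle p, y_j \rangle$. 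This minimum is extracted by binary searching over a threshold $b$ and using the halfspace-emptiness primitive of Theorem~\ref{thm:geo} to test whether any $p \in P_i$ satisfies $\langle p, y_j \rangle \le b$; each such test costs $\tilde{O}(\Delta^{1-1/\lfloor (d+1)/2\rfloor})$ time. Since all coordinates are $\polylog n$-bit integers, the inner product $\langle p_k, y_j\rangle$ ranges over a set of polylogarithmic bit length, so $O(\polylog n)$ binary-search steps per query suffice.

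Summing costs over the $\Delta$ choices of $i$ (preprocessing) and the $\Delta^2$ pairs $(i,j)$ (queries) gives
$$
\tilde{O}\!\left(\Delta \cdot \Delta \;+\; \Delta^2 \cdot \Delta^{1-1/\lfloor (d+1)/2\rfloor}\right) \;=\; \tilde{O}\bigl(\Delta^{3-1/\lfloor (d+1)/2\rfloor}\bigr),
$$
matching the stated bound. The only real conceptual step is the lifting $p_k = (X(k), A_{i,k})$, $y_j = (Y(j), 1)$ that turns the min-plus-with-low-rank-surrogate problem into a standard ``lowest inner product'' query, for which halfspace emptiness plus binary search is essentially off-the-shelf; this is where the low-rank structure of $B$ is exploited, and it is also the main place one must take care, since an arbitrary $B$ would require the full $d = \Delta$ dimensions, which would make the geometric data structure useless.
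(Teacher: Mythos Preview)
Your proof is correct and follows essentially the same approach as the paper: define the surrogate $g_{i,j}=\min_k\{A_{i,k}+X(k)^TY(j)\}$, lift to $(d+1)$-dimensional points $(X(k),A_{i,k})$ with query direction $(Y(j),1)$, and extract the minimum inner product via binary search over halfspace-emptiness queries from Theorem~\ref{thm:geo}. The only cosmetic difference is the ordering of coordinates in the lifted points; your added remark that the pointwise bound $|B_{k,j}-X(k)^TY(j)|\le W$ propagates through the $\min$ is a nice explicit justification the paper leaves implicit.
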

\begin{proof}
By the structure of $B$, for any $(k, j) \in I(k') \times I(j')$, we have 
$$\left|B_{k, j} - X_{k',j'}(k)^T Y_{k',j'}(j)\right| \le W.$$
Therefore, if we can accurately compute 
$$\tilde{C}^{i',k',j'}_{i, j} = \min_{k \in I(k')} \left\{A_{i, k} + X_{k',j'}(k)^T Y_{k',j'}(j) \right\}, $$
we immediately get a $W$-additive approximation of $A_{I(i'),I(k')} \star B_{I(k'),I(j')}$. 

Create a set of $(d+1)$-dimensional points
$$P_i = \left\{\begin{pmatrix}
A_{i, k}\\
X_{k',j'}(k)
\end{pmatrix}: k \in I(k') \right\},$$
and use the data structure in Theorem \ref{thm:geo} to pre-process this set. Each set has size $O(\Delta)$, and there are $|I(i')| = \Delta$ such sets, so the total pre-processing time is $\tilde{O}(\Delta^2)$. For any $j \in I(j')$, we create a $(d+1)$-dimensional vector $v_j=\begin{pmatrix}
1\\
Y_{k',j'}(j)
\end{pmatrix}$. We observe that 

$$A_{i,k} + X_{k', j'}(k)^TY_{k',j'}(j) = v_j^T \begin{pmatrix}
A_{i, k}\\
X_{k',j'}(k)
\end{pmatrix}, $$
so $\tilde{C}^{i',k',j'}_{i, j} = \min_{x \in P_i} v_j^T x$. In order to compute $\min_{x \in P_i} v_j^T x$ for every pair $(i, j)$, we use the emptiness query of the geometric data structure. We want to find the minimum value of $b$, so that there exists a point $x \in P_i$ where $v_j^T x \le b$. This is equivalent to testing whether the half-space $\lambda = \{x\in \mathbb{R}^{d+1} | v_j^T x \le b\}$ intersects $P_i$. Therefore, we can use binary search on the minimum value of $b$, which will be equal to $\tilde{C}^{i',k',j'}_{i, j}$. 

Each emptiness query takes $\tilde{O}(\Delta^{1-1/\lfloor (d+1)/2 \rfloor}) $ time, and we need to query $O(\log (|A|_\infty + |B|_\infty))$ time for each pair $(i, j) \in I(i') \times I(j')$, so in total it takes $\tilde{O}(\Delta^{3-1/\lfloor (d+1)/2 \rfloor}) $ time to compute $\tilde{C}^{i',k',j'}$. 
\end{proof}

\begin{lemma}\label{lem:approx_MM}
There exists a $\tilde{O}(n^3 \cdot \Delta^{-1/\lfloor (d+1)/2\rfloor})$ time algorithm that computes a $W$-additive approximation $\tilde{C}$ of $A \star B$. 
\end{lemma}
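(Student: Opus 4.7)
The plan is to invoke Lemma~\ref{lem:approx_Delta_MM} on every triple of aligned $\Delta\times\Delta$ blocks and then take the entrywise minimum across the middle block index $k'$ to glue the per-block approximations into a global one.

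Concretely, for every triple $(i',k',j')$ of multiples of $\Delta$ in $[n]$ (there are $(n/\Delta)^3$ of them), I would run the algorithm of Lemma~\ref{lem:approx_Delta_MM} to obtain a $W$-additive approximation $\tilde{C}^{i',k',j'}$ of $A_{I(i'),I(k')} \star B_{I(k'),I(j')}$ in $\tilde{O}(\Delta^{3-1/\lfloor (d+1)/2\rfloor})$ time. Then for each $(i,j)\in I(i')\times I(j')$ I would set
\[
\tilde{C}_{i,j} \;:=\; \min_{k' : \Delta \mid k'}\; \tilde{C}^{i',k',j'}_{i,j}.
\]
The aggregation step costs $O(n^3/\Delta)$ since for each of the $n^2$ entries we take a min over $n/\Delta$ values, and the preprocessing step costs $(n/\Delta)^3 \cdot \tilde{O}(\Delta^{3-1/\lfloor (d+1)/2\rfloor}) = \tilde{O}(n^3/\Delta^{1/\lfloor (d+1)/2\rfloor})$, which dominates (since $\lfloor(d+1)/2\rfloor\ge 1$).

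For correctness, Lemma~\ref{lem:approx_Delta_MM} guarantees that each $\tilde{C}^{i',k',j'}_{i,j}$ lies in $[(A_{I(i'),I(k')}\star B_{I(k'),I(j')})_{i,j} - W,\; (A_{I(i'),I(k')}\star B_{I(k'),I(j')})_{i,j} + W]$. Since $C_{i,j} = \min_{k'}(A_{I(i'),I(k')}\star B_{I(k'),I(j')})_{i,j}$ and since translating each argument of a minimum by at most $W$ translates the minimum itself by at most $W$, we get $|\tilde{C}_{i,j} - C_{i,j}|\le W$ as desired. Given that Lemma~\ref{lem:approx_Delta_MM} already encapsulates the nontrivial geometric reduction to halfspace emptiness queries, there is no real obstacle here; the argument is a routine block-wise aggregation with a bookkeeping check that the $W$-additive error is preserved by the outer minimum over $k'$.
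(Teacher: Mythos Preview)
Your proposal is correct and follows essentially the same approach as the paper: invoke Lemma~\ref{lem:approx_Delta_MM} on each of the $(n/\Delta)^3$ block triples and then take the minimum over $k'$. The paper's proof is terser (it does not spell out the correctness check that taking a min of $W$-approximations yields a $W$-approximation of the min), but the argument is the same.
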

\begin{proof}
For every triple $(i', k', j')$ where $i', k', j'$ are multiples of $\Delta$, we compute $\tilde{C}^{i',k',j'}$ using the algorithm in Lemma \ref{lem:approx_Delta_MM}. Since there are $O((n/\Delta)^3)$ such triples, it takes $\tilde{O}(n^3 \cdot \Delta^{-1/\lfloor (d+1)/2\rfloor})$ time in total. Then we compute $\tilde{C}$ using 
$\tilde{C}_{i,j}=\min_{k': \Delta | k'} \tilde{C}^{i',k',j'}_{i,j}$ in $O(n^3 \cdot \Delta^{-1})$ time.
\end{proof}

\subsection{Phase 2: Create Estimate Matrix $\hat{C}$ by Random Sampling}

This phase of the algorithm consists of $10 \rho \ln n$ rounds. For each round $r$, we sample $j^r \in [n]$ uniformly at random. Define $A^r$ to be an $n \times n$ matrix where $A^r_{i,k}:=A_{i,k}+B_{k, j^r} - \tilde{C}_{i,j^r}$, and define $B^r$ such that $B^r_{k,j}:=B_{k, j} - B_{k,j^r}$. If we compute $C^r = A^r \star B^r$, we can infer $C = A \star B$ via the relation $C_{i,j}=C^r_{i,j}+\tilde{C}_{i,j^r}$. However, it is not always possible to compute $C^r$ efficiently, since the weights of $A^r$ and $B^r$ can be arbitrarily large. Therefore, we need to set the large entries in $A^r$ to be $\infty$ in order to compute $A^r \star B^r$ efficiently. Specifically, we will set an entry of $A^r$ to $\infty$ if its absolute value is more than $3W$. Then we can compute $C^r = A^r \star B^r$ in $\tilde{O}(\sqrt{W} n^{(3+\omega)/2})$ time by Theorem \ref{thm:small_weight_times_large_weight}. 

This phase deviates from the approach of Bringmann et al. Bringmann et al. set the large entries of both $A^r$ and $B^r$ to $\infty$. If we were to do that, we wouldn't be able to complete Phase 3 -- there doesn't seem to be enough to finish the $(\min, +)$-product computation in truly subcubic time. By only setting the large entries of $A^r$ to $\infty$ and letting $B^r$ keep all its entries, we offload enough work onto Phase 2, so that now Phase 3 can also be done in truly subcubic time.

Since there are $\rho$ rounds, the total time complexity of this phase is $\tilde{O}(\rho\sqrt{W} n^{(3+\omega)/2})$. Intuitively, fix any $i, j \in [n]$, if $A^r_{i, k}$ is not set to $\infty$, then $C^r_{i, j} \le \left(A_{i,k}+B_{k, j^r} - \tilde{C}_{i,j^r}\right) + \left(B_{k, j} - B_{k,j^r}\right) = A_{i, k} + B_{k, j} - \tilde{C}_{i,j^r}$. Thus, if we take $\hat{C}_{i, j}$ to be $\min_r \left\{ C^r_{i, j}+ \tilde{C}_{i,j^r} \right\}$, then $\hat{C}_{i, j} \le A_{i, k} + B_{k, j}$ as long as $A^r_{i, k} < \infty$ for at least one $r$. We will formalize this intuition and show that we only need to enumerate a sub-cubic number of $(i, k, j)$ triples in order to correct all entries in $\hat{C}$ after $10 \rho \ln n$ rounds. 
\begin{definition}
We call a triple $(i, k, j)$
\begin{itemize}
\item strongly relevant if $A_{i,k}+B_{k, j} = C_{i, j }$;
\item weakly relevant if $|A_{i,k}+B_{k, j} - \tilde{C}_{i, j }| \le 3W$;
\item uncovered if for all $1 \le r \le 10 \rho \ln n$, $|A_{i,k}^r|>3W$.
\end{itemize}
\end{definition}
Since whether a triple $(i, k, j)$ is uncovered only depends on $(i, k)$, we will also call a pair $(i, k)$  uncovered if for all $1 \le r \le 10 \rho \ln n$, $|A_{i,k}^r|>3W$. A triple (pair) that is not uncovered will be called \textit{covered}. 

If a triple $(i, k, j)$ is not strongly relevant, then even if $A^r_{i, k} = \infty$ for every round $r$, it doesn't affect whether $\hat{C}_{i,j} = C_{i,j}$. If a triple $(i, k, j)$ is covered, then there exists a round $r$ such that $A^r_{i, k}$ is not set to $\infty$. In this case, $\hat{C}_{i, j} \le C^r_{i,j}+\tilde{C}_{i,j^r} \le A_{i,k}+B_{k, j}$. Since only strongly relevant triples matter, and our algorithm already updates the answer for every covered triples, so we need to update $\hat{C}$ using triples that are both strongly relevant and uncovered. Specifically, if we can enumerate all strongly relevant and uncovered triples $(i, k, j)$, and update $\hat{C}_{i,j}$ using $A_{i,k}+B_{k,j}$, we can correct all entries in $\hat{C}$. 

However, it is hard to only enumerate strongly relevant and uncovered triples without enumerating some additional triples. Thus we allow the algorithm to enumerate some of the \textit{weakly} relevant and uncovered triples, in addition to strongly relevant and uncovered triples. In this way, we can cover all strongly relevant and uncovered triples, while keeping the total number of triples small. Note that since $\tilde{C}$ is a $W$-additive approximation of $C$, a strongly relevant triple is always weakly relevant, so we care about the total number of weakly relevant and uncovered triples. 
The next lemma shows that the number of such  triples is truly sub-cubic. 
 
 \begin{lemma}\label{lem:weak_triple_small}
With high probability, the number of \textit{weakly} relevant and uncovered triples is at most $n^3/\rho$.
 \end{lemma}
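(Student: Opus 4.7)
The plan is to reduce the problem to a per-pair concentration argument. Fix a pair $(i,k) \in [n]^2$ and let $W_{i,k} = \{j \in [n] : (i,k,j) \text{ is weakly relevant}\}$, i.e.\ the set of column indices $j$ with $|A_{i,k}+B_{k,j}-\tilde{C}_{i,j}| \le 3W$. The key observation is that the definitions line up exactly: $|A^r_{i,k}| \le 3W$ precisely when $|A_{i,k}+B_{k,j^r}-\tilde{C}_{i,j^r}| \le 3W$, i.e.\ when $j^r \in W_{i,k}$. So the pair $(i,k)$ is uncovered if and only if none of the $10\rho \ln n$ uniform samples $j^1,\dots,j^{10\rho \ln n}$ hit $W_{i,k}$.

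Next I would threshold on $|W_{i,k}|$. If $|W_{i,k}| \ge n/\rho$, then the probability that a single sample $j^r$ avoids $W_{i,k}$ is at most $1-1/\rho$, so the probability that all $10\rho \ln n$ samples avoid it is at most $(1-1/\rho)^{10\rho \ln n} \le e^{-10\ln n} = n^{-10}$. Applying a union bound over the $n^2$ pairs $(i,k)$, with probability at least $1-n^{-8}$, every pair $(i,k)$ that is uncovered satisfies $|W_{i,k}| < n/\rho$.

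Finally, I would count: the total number of weakly relevant and uncovered triples equals
\[
\sum_{(i,k) \text{ uncovered}} |W_{i,k}| < n^2 \cdot \frac{n}{\rho} = \frac{n^3}{\rho},
\]
which is exactly the claimed bound. The step that carries all the substance is the union-bound / sampling argument; everything else is unpacking definitions. I do not foresee a real obstacle beyond verifying the equivalence between ``uncovered pair'' and ``no sampled $j^r$ is weakly relevant to $(i,k)$,'' which follows directly from the definition of $A^r$ and the definition of weak relevance (once one notes that $\tilde{C}$ plays the same role in both, so no extra slack is needed when moving between the two conditions).
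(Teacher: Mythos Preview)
Your proposal is correct and follows essentially the same approach as the paper: threshold pairs $(i,k)$ by whether $|W_{i,k}|$ exceeds $n/\rho$, use the sampling bound $(1-1/\rho)^{10\rho\ln n}\le n^{-10}$ plus a union bound over $n^2$ pairs to show all ``heavy'' pairs are covered with high probability, and then sum $|W_{i,k}|<n/\rho$ over the remaining pairs. Your presentation is slightly cleaner in that you make the equivalence ``$(i,k)$ uncovered $\iff$ no $j^r\in W_{i,k}$'' explicit, whereas the paper only uses (and needs) the one direction.
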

 \begin{proof}
We say a pair $(i, k)$ is bad if the number of weakly relevant triples $(i, k, j)$ is greater than $n/\rho$. 

 Fix any bad $(i, k)$. For a random $j \in [n]$, the probability that $(i, k, j)$ is weakly relevant is at least $1/\rho$. Since we have $10 \rho \ln n$ randomly sampled $j^r$, the probability that at least one $j^r$ forms a weakly relevant triple $(i, k, j^r)$ is at least $1- \left(1 - 1/\rho\right)^{10 \rho \ln n} \ge 1-1/n^{10}$. Suppose $(i, k, j^r)$ is weakly relevant, then $|A^r_{i, k}|=|A_{i,k}+B_{k, j^r} - \tilde{C}_{i, j^r}| \le 3W$. 
Thus, $A^r_{i, k}$ will not be set to $\infty$ in round $r$, so $(i, k)$ is covered. 
By taking a union bound over all bad $(i, k)$, we conclude that with probability at least $1-1/n^8$, all triples $(i, k, j)$ will be covered when $(i, k)$ is bad. It means that with high probability, these bad $(i, k)$ pairs don't contribute any weakly relevant and uncovered triples. 

For a pair $(i, k)$ that is not bad, the number of $j$ such that $(i, k, j)$ is weakly relevant is at most $n/\rho$, by definition of a bad pair. Thus, these $(i, k)$ pairs contribute at most $n^3/\rho$ weakly relevant and uncovered pairs. 
 \end{proof}
 
\subsection{Phase 3: Enumerate Strongly Relevant and Uncovered Triples}
 
It remains to show how to quickly iterate through strongly relevant, uncovered triples. Fix $i', k', j'$ multiples of $\Delta$, we will show how to efficiently enumerate strongly relevant, uncovered triples in $I(i') \times I(k') \times I(j')$. We consider the set $S_{i',k',j'} \subseteq I(i') \times I(j') \times I(k')$, consisting of triples $(i, j, k)$ such that $A_{i,k} + X_{k',j'}(k)^T Y_{k',j'}(j) \le 2W+\tilde{C}_{i,j}$. The following lemma shows that it is sufficient to enumerate triples in this set. 

\begin{lemma}\label{lem:Sikj}
The set $S_{i',k',j'}$ contains all strongly relevant triples in $I(i') \times I(j') \times I(k')$, and it contains only weakly relevant triples. 
\end{lemma}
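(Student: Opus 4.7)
The plan is to prove both inclusions directly from two basic facts: (a) the approximation guarantee $|\tilde{C}_{i,j} - C_{i,j}| \le W$ established in Phase~1, and (b) the hypothesis that $|B_{k,j} - X_{k',j'}(k)^T Y_{k',j'}(j)| \le W$ for every $(k,j) \in I(k') \times I(j')$. Both inequalities give us slack of at most $W$, and the set $S_{i',k',j'}$ allows slack of $2W$, so there is just enough room to argue both containments with one $W$ left over on each side.

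First I would verify fact (a) quickly. From Lemma~\ref{lem:approx_Delta_MM}, each block estimate $\tilde{C}^{i',k',j'}_{i,j}$ is a $W$-additive approximation of the corresponding block min-plus product, and $\tilde{C}_{i,j}$ is the minimum of these over $k'$. Since $C_{i,j}$ is likewise the minimum of the block min-plus products over $k'$, the $W$-additive approximation is preserved, yielding $|\tilde{C}_{i,j} - C_{i,j}| \le W$. With this in hand, the first containment is straightforward: for any strongly relevant triple $(i,k,j)$, we have $A_{i,k} + B_{k,j} = C_{i,j} \le \tilde{C}_{i,j} + W$, and replacing $B_{k,j}$ by $X_{k',j'}(k)^T Y_{k',j'}(j)$ costs at most an additional $+W$, giving $A_{i,k} + X_{k',j'}(k)^T Y_{k',j'}(j) \le \tilde{C}_{i,j} + 2W$, so $(i,k,j) \in S_{i',k',j'}$.

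For the second containment, I would start from the defining inequality $A_{i,k} + X_{k',j'}(k)^T Y_{k',j'}(j) \le 2W + \tilde{C}_{i,j}$ of $S_{i',k',j'}$ and again use fact (b) to swap $X^T Y$ for $B_{k,j}$, obtaining the upper bound $A_{i,k} + B_{k,j} \le \tilde{C}_{i,j} + 3W$. For the matching lower bound I would use that $A_{i,k} + B_{k,j} \ge C_{i,j}$ by definition of the min-plus product, and then apply fact (a) once more to conclude $A_{i,k} + B_{k,j} \ge \tilde{C}_{i,j} - W \ge \tilde{C}_{i,j} - 3W$. Combining the two bounds gives $|A_{i,k} + B_{k,j} - \tilde{C}_{i,j}| \le 3W$, exactly the definition of weakly relevant.

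There is essentially no obstacle here: the lemma is a bookkeeping check that the threshold $2W$ chosen in the definition of $S_{i',k',j'}$ is correctly calibrated against the two sources of error (the rank-$d$ approximation error and the Phase~1 approximation error), with a symmetric $W$ of slack on either side so that strongly relevant triples land safely inside and no triple outside the weakly relevant set slips in. The only minor care needed is to remember the one-sided versus two-sided role of the approximation $\tilde{C}$, namely that the lower bound $A_{i,k} + B_{k,j} \ge \tilde{C}_{i,j} - W$ uses $\tilde{C}_{i,j} \le C_{i,j} + W$ while the upper-bound direction uses $\tilde{C}_{i,j} \ge C_{i,j} - W$.
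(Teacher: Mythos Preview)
Your proposal is correct and follows essentially the same route as the paper: both directions are obtained by combining the Phase~1 guarantee $|\tilde{C}_{i,j}-C_{i,j}|\le W$ with the block approximation $|B_{k,j}-X_{k',j'}(k)^T Y_{k',j'}(j)|\le W$, exactly as you outline. The paper's proof is slightly terser but the chain of inequalities is identical.
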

\begin{proof}
Let $(i, k, j)$ be any strongly relevant triple. Then 
\begin{equation*}
    \begin{split}
        &A_{i,k}+X_{k',j'}(k)^T Y_{k',j'}(j)-\tilde{C}_{i,j}\\
        =& A_{i,k} + B_{k,j} - C_{i,j} + \left(X_{k',j'}(k)^T Y_{k',j'}(j)-B_{k,j}\right) + \left(C_{i,j}-\tilde{C}_{i,j}\right)\\
         \le & 2W,
    \end{split}
\end{equation*}
so $(i, k, j) \in S_{i',k',j'}$. 

In order to prove the second claim, we need to show $\left|A_{i,k}+B_{k,j} - \tilde{C}_{i,j}\right| \le 3W$ for every triple $(i, j, k) \in S_{i',k',j'}$. Since $\tilde{C}$ is a $W$-additive approximation of $C$,  $A_{i,k}+B_{k,j} - \tilde{C}_{i,j} \ge -W$ holds for every triple $(i, k, j)$. 
Since $(i,k,j)\in S_{i',k',j'}$, we have $A_{i,k} + X_{k',j'}(k)^T Y_{k',j'}(j) \le 2W+\tilde{C}_{i,j}$, or equivalently:
$$A_{i,k}+B_{k,j} - \tilde{C}_{i,j}  \le 2W+(B_{k,j}-X_{k',j'}(k)^T Y_{k',j'}(j)).$$
Since $B_{k,j}$ differs at most $W$ from $X_{k',j'}(k)^T Y_{k',j'}(j)$, we have $A_{i,k}+B_{k,j}-\tilde{C}_{i,j} \le 3W$. 
\end{proof}

By Lemma \ref{lem:Sikj}, it suffices to enumerate uncovered triples in $S_{i',k',j'}$. 
For each $i \in I(i')$, create a set of $(d+1)$-dimensional points $$Q_i = \left\{\begin{pmatrix}
A_{i, k}\\
X_{k',j'}(k)
\end{pmatrix}: k \in I(k') \wedge (i, k) \text{ is uncovered}\right\}, $$
and pre-process these points using the data structure in Theorem \ref{thm:geo}. For each $(i, j) \in I(i') \times I(j')$, we create the following half-space:
$$\lambda_{i,j}=\left\{x\in \mathbb{R}^{d+1}| \begin{pmatrix}
1\\
Y_{k',j'}(j)
\end{pmatrix}^T x \le 2W + \tilde{C}_{i,j}\right\}.$$
Then $Q_i \cap \lambda_{i,j}$ contains the set of $k \in I(k')$ such that $(i, k, j) \in S_{i',j',k'}$ and $(i, k)$ is uncovered. Therefore, we can use the data structure in Theorem \ref{thm:geo} to list the set of $k$ in $\tilde{O}(\Delta^{1-1/\lfloor (d+1)/2\rfloor} + |Q_i \cap \lambda|)$ time. Note that the total number of listed points is bounded by the number of weakly-relevant, uncovered triples, so the summation of the second term over all $i', k', j', i, j$ is $\tilde{O}(n^3/\rho)$. The summation of the first term over all $i', k', j', i, j$ is $\tilde{O}(n^3  \cdot \Delta^{-1/\lfloor (d+1)/2 \rfloor})$.

\section{Application I: Geometric APSP}

In this section, we study an algorithm for APSP where the edge weights of the input graph can be approximated by a low dimensional geometric function. 

Let $W$ be an integer, $d\geq 1$ be a constant integer and let $\delta\in (0,1]$ be a constant. Let us define (as in the introduction) a $(W,d,\delta)$-geometrically weighted clustered graph, $(W,d,\delta)$-GWC for short as follows. $G=(V, E)$ is $(W,d,\delta)$-GWC if
\begin{itemize}
\item $V$ is partitioned into $t = n^{1-\delta}$ subsets $V_1, V_2, \ldots, V_t$ of size $O(n^\delta)$,
\item for every $i,j\in \{1,\ldots,t\}$, each $v \in V_i$ is assigned a $d$-dimensional integer vector $p^{i,j}(v)$, and each $u \in V_j$ is assigned a $d$-dimensional integer vector $q^{i,j}(u)$, and
\item for $v \in V_i$,  $u \in V_j$, $|w(v,u)-p^{i,j}(v)^T q^{i,j}(u)|\leq W.$ In other words, the edge weights in $V_i\times V_j$ are determined by a matrix whose $W$-approximate rank is at most $d$,
\item the absolute value of any edge weight is at most $O(n^c)$ for some constant $c$.
\end{itemize}

The last bullet is only needed so that SSSP in such graphs can be performed in truly subcubic time even if there are negative edge weights, e.g. as in Goldberg~\cite{Goldberg95}.

The following is a direct corollary of Theorem \ref{thm:low_rank}:
\begin{corollary}\label{cor:geo_mul}
For any integer matrix $A$ and $B$ the generalized adjacency matrix of a $(W,d,\delta)$-GWC graph, we can compute $C= A \star B$ in $\tilde{O}(n^{3-\delta/ \lfloor(d+1)/2 \rfloor} + n^{(9+\omega)/4} \cdot W^{1/4})$ time. 
\end{corollary}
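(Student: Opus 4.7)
The plan is to verify that the generalized adjacency matrix $B$ of a $(W,d,\delta)$-GWC graph satisfies the structural hypothesis of Theorem \ref{thm:low_rank}, so that Theorem \ref{thm:low_rank} can be applied essentially as a black box.

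First, I would reindex the vertices so that each cluster $V_i$ occupies a contiguous range of row/column indices in $B$. Since each $|V_i| = O(n^\delta)$ but not necessarily exactly $n^\delta$, I would refine the partition into clusters of size exactly $n^\delta$ (subdividing each $V_i$ into a constant number of sub-clusters, and if needed padding with dummy vertices whose incident edges have very large but $\polylog n$-bit integer weights—for instance, a value larger than any possible shortest path length, which lies in $O(n^{c+1})$ given the edge-weight bound in the fourth bullet). This refinement only increases the number of clusters by a constant factor and does not harm us: restricting the geometric embeddings $p^{i,j}, q^{i,j}$ to sub-clusters preserves the $W$-approximate rank bound, since any submatrix of a $W$-approximate rank-$d$ matrix still has $W$-approximate rank at most $d$. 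The fourth bullet also guarantees that all (real) entries of $B$ are $\polylog n$-bit integers, meeting the entry-size requirement of Theorem \ref{thm:low_rank}.

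Second, I would check the $W$-approximate rank condition block by block. For clusters $V_i, V_j$, the third bullet gives $\lvert w(v,u) - p^{i,j}(v)^T q^{i,j}(u)\rvert \le W$ for every $v\in V_i, u\in V_j$. The matrix $M^{i,j}$ with entries $p^{i,j}(v)^T q^{i,j}(u)$ factors as the product of a $|V_i|\times d$ integer matrix with a $d\times |V_j|$ integer matrix, so $M^{i,j}$ is an integer matrix of rank at most $d$. Thus the $n^\delta\times n^\delta$ block of $B$ indexed by $V_i\times V_j$ is at $\ell_\infty$-distance at most $W$ from a rank-$\le d$ integer matrix, which is exactly the definition of $W$-approximate rank at most $d$.

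Having checked both hypotheses, I would invoke Theorem \ref{thm:low_rank} on $A$ and (the reindexed/padded) $B$ with parameters $W$, $d$, and $\delta$, obtaining $C = A\star B$ in time $\tilde{O}(n^{3-\delta/\lfloor(d+1)/2\rfloor} + W^{1/4}n^{(9+\omega)/4})$. Undoing the reindexing at the end is a free permutation of rows and columns, and the dummy rows/columns can be discarded. There is no real obstacle in this argument; the only care required is in the bookkeeping around padding clusters to size exactly $n^\delta$ while preserving both the integrality and the rank-$d$ approximation.
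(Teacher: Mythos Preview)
Your approach is exactly what the paper intends; the paper states this as a direct corollary of Theorem~\ref{thm:low_rank} with no further proof, and your verification of the hypotheses is the right way to fill that in.

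One small caution on the bookkeeping you already flagged as delicate: padding a cluster with dummy vertices whose incident edge weights are a large constant $M$ does \emph{not} obviously preserve $W$-approximate rank $\le d$ on each block. A block of the form
\[
\begin{pmatrix} X & M\cdot \mathbf{1} \\ M\cdot \mathbf{1} & M\cdot \mathbf{1}\end{pmatrix}
\]
can have rank as large as $\mathrm{rank}(X)+2$, so the witnessing low-rank integer matrix for the padded block would need a higher rank than $d$, and that would worsen the exponent. A clean fix is to pad each $V_i$ by duplicating one of its existing vertices (copying all its embeddings $p^{i,j}$, $q^{j,i}$ and all its incident edge weights); duplicating rows and columns does not increase rank, so each padded block still has $W$-approximate rank $\le d$. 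Then pad $A$ with a sufficiently large finite value in the new columns so the duplicated $k$-indices never achieve the minimum, and discard the dummy rows and columns of the output at the end. Alternatively, one can simply observe that the algorithm behind Theorem~\ref{thm:low_rank_messy} processes each block triple independently, so blocks of size at most $\Delta$ (rather than exactly $\Delta$) only make it faster; with that observation no padding is needed at all, only the reindexing.
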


Using Corollary \ref{cor:geo_mul}, we can compute the shortest distance between two vertices among all paths with a small length. Using a standard technique in APSP algorithms, we can compute shortest paths among the long paths by randomly sampling vertices. 

\begin{theorem}
We can compute APSP for a $(W,d,\delta)$-GWC graph in 
\begin{itemize}
\item $\tilde{O}(W^{1/8} n^{(21+\omega)/8})$ time whenever $W>n^{3-\omega-4\delta/\lfloor (d+1)/2 \rfloor}$, and
\item $\tilde{O}(n^{3-\delta/(2\lfloor (d+1)/2 \rfloor}))$ time if $W\leq n^{3-\omega-4\delta/\lfloor (d+1)/2 \rfloor}$. 
\end{itemize}
\end{theorem}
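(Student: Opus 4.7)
The plan is to use the standard ``short paths via iterated Min-Plus, long paths via hitting-set Dijkstra'' template for APSP, plugging in Corollary~\ref{cor:geo_mul} for the Min-Plus step and Goldberg's algorithm~\cite{Goldberg95} to enable Johnson's trick. Let $B$ be the generalized adjacency matrix of the given $(W,d,\delta)$-GWC graph, let $L \geq 1$ be a parameter to be chosen, and set $\delta' := \delta/\lfloor(d+1)/2\rfloor$.

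First, I would compute a matrix $D$ containing the shortest-path distances over paths with at most $L$ edges: iteratively define $A^{(1)} = B$ and $A^{(i)} = A^{(i-1)} \star B$ for $i = 2, \ldots, L$ using Corollary~\ref{cor:geo_mul} (the left factor $A^{(i-1)}$ can be arbitrary, while the right factor $B$ retains the required low-rank block structure), and then take $D$ to be the entrywise minimum of $A^{(1)}, \ldots, A^{(L)}$. The total cost is $L \cdot \tilde{O}(n^{3-\delta'} + W^{1/4} n^{(9+\omega)/4})$.

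Next, I would handle the longer shortest paths via random sampling. Sample a hitting set $H \subseteq V$ of size $\Theta((n/L)\log n)$; by a standard union-bound argument, with high probability every shortest path with more than $L$ edges has an internal vertex in $H$, so $d(u,v) = \min_{h \in H}(d(u,h) + d(h,v))$ for any such pair. Because edge weights may be negative, I would first apply Johnson's trick: run Goldberg's SSSP~\cite{Goldberg95} once (in $\tilde{O}(n^{2.5})$ time, which is admissible since edge weights are polynomially bounded) to obtain a potential $\phi: V \to \mathbb{Z}$ making all reweighted edges $w'(u,v) := w(u,v) + \phi(u) - \phi(v)$ non-negative. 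Then for each $h \in H$ I would run Dijkstra on the reweighted graph and its reverse in $\tilde{O}(n^2)$ time each to obtain $d'(h,\cdot)$ and $d'(\cdot, h)$, recovering the original distances by undoing the potential shift. The total SSSP work is $\tilde{O}(n^{2.5} + n^3/L)$. Crucially, the reweighting is used \emph{only} for Dijkstra, so it does not disturb the low-rank block structure of $B$ exploited in the Min-Plus phase.

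The final estimate is $\hat{d}(u,v) = \min\bigl(D_{u,v},\, \min_{h \in H}(d(u,h) + d(h,v))\bigr)$, and the total running time is
\[
\tilde{O}\Bigl(L \cdot \bigl(n^{3-\delta'} + W^{1/4} n^{(9+\omega)/4}\bigr) + n^3/L + n^{2.5}\Bigr).
\]
I would then choose $L$ depending on which term of the per-iteration cost dominates. When $W \leq n^{3-\omega-4\delta'}$, the $n^{3-\delta'}$ term dominates, and setting $L = \lceil n^{\delta'/2}\rceil$ equalizes $L n^{3-\delta'}$ with $n^3/L$ to give $\tilde{O}(n^{3-\delta'/2})$. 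When $W > n^{3-\omega-4\delta'}$, the $W^{1/4} n^{(9+\omega)/4}$ term dominates, and setting $L = \lceil n^{(3-\omega)/8}/W^{1/8}\rceil$ equalizes that term with $n^3/L$ to give $\tilde{O}(W^{1/8} n^{(21+\omega)/8})$ (and if this $L$ drops below $1$, i.e.\ $W \geq n^{3-\omega}$, the trivial $O(n^3)$ algorithm already matches the stated bound). The proof is otherwise a routine assembly of known pieces; the one point to take care of is keeping Johnson's reweighting out of the Min-Plus iterations, since that would amount to a rank-$2$ additive perturbation of $B$ and replace $d$ by $d+2$ in Corollary~\ref{cor:geo_mul}, worsening the exponent.
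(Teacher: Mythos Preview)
Your proposal is correct and follows essentially the same approach as the paper: iterate the structured Min-Plus product of Corollary~\ref{cor:geo_mul} $L$ times for short paths, handle long paths by a random hitting set with Johnson/Goldberg plus Dijkstra, and balance $L$ exactly as you do in the two regimes. Your extra remark about keeping Johnson's reweighting out of the Min-Plus iterations (to avoid bumping $d$ to $d+2$) is a nice point the paper does not spell out.
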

\begin{proof}
Let $B$ be the generalized adjacency matrix, and let $\ell$ be a parameter to be fixed later. For each $i \le \ell$, we can compute $B^{(i)}$ by iterating the product $B^{(i)}\leftarrow A\star B$ for $A=B^{(i-1)}$. By Corollary \ref{cor:geo_mul}, this step will take $\tilde{O}(\ell \cdot n^{3-\delta/ \lfloor(d+1)/2 \rfloor} + \ell \cdot n^{(9+\omega)/4} \cdot W^{1/4})$ time. 

We can randomly sample $\tilde{\Theta}(n/\ell)$ vertices $S$, and perform Dijkstra's algorithm to and from these vertices in $S$ (after the usual Johnson's preprocessing to get rid of any negative weights, and using say Goldberg's SSSP algorithm which works in truly subcubic time since the edge weights are assumed to be polynomial in $n$). With high probability, $S$ hits a shortest path between every two vertices that have a shortest path containing at least $\ell$ vertices. We can perform this step in $\tilde{\Theta}(n^3/\ell)$ time.

The first step gives the shortest path between two vertices that uses at most $\ell$ vertices, and the second step gives the shortest path that uses more than $\ell$ vertices. Thus, by taking the smaller one over these two, we can correctly compute the APSP.

The total time complexity is $\tilde{O}(\ell \cdot n^{3-\delta/ \lfloor(d+1)/2 \rfloor} + \ell \cdot n^{(9+\omega)/4} \cdot W^{1/4} + n^3/\ell)$.

If $W>n^{3-\omega-4\delta/\lfloor (d+1)/2 \rfloor}$, then $n^{(9+\omega)/4} \cdot W^{1/4}>n^{3-\delta/ \lfloor(d+1)/2 \rfloor}$, so the running time is 
\[\tilde{O}(\ell \cdot n^{(9+\omega)/4} \cdot W^{1/4} + n^3/\ell).\]
We can set $\ell$ to be $n^{(3-\omega)/8}/W^{1/8}$, balancing the two terms of the runtime and thus minimizing it at $\tilde{O}(W^{1/8} n^{(21+\omega)/8})$.

Otherwise, if $W\leq n^{3-\omega-4\delta/\lfloor (d+1)/2 \rfloor}$, then $n^{(9+\omega)/4} \cdot W^{1/4}\leq n^{3-\delta/ \lfloor(d+1)/2 \rfloor}$, so the running time is 
\[\tilde{O}(\ell \cdot n^{3-\delta/ \lfloor(d+1)/2\rfloor}+ n^3/\ell).\]
Then it makes sense to set $\ell=n^{\delta/ (2\lfloor(d+1)/2\rfloor)}$, minimizing the runtime to $\tilde{O}(n^{3-\delta/ (2\lfloor(d+1)/2\rfloor}))$.

\end{proof}


\section{Application II: Batch Range Mode}

In this section, as an application of our Main Algorithm, we give an $O(n^{1.5-\eps})$ time  algorithm for the Batch Range Mode query problem for some $\eps>0$. In a high level, there are two steps in the algorithm. First we use the Main Algorithm to obtain a truly subcubic time $(\min, +)$-product for particularly structured matrices; then we show how to reduce range mode to this kind of structured $(\min, +)$-product.

\begin{lemma}\label{lem:lem_multi_inc}
Let $A, B$ be two $n \times n$ integer matrices, where matrix $B$ meets
\begin{enumerate}[label=\arabic*)]
    \item Each row of $B$ is non-increasing; 
    \item The difference between the sum of elements in the $j$-th column, and the sum of elements in the $(j+1)$-th column is at most $m$, for any $j$.
\end{enumerate}
When $m = \Omega(n^{(\omega-1)/2})$, there exists a $\tilde{O}(n^{(14+\omega)/6} \cdot m^{1/6})$ time algorithm that computes $A \star B$, which is truly sub-cubic as long as $m=O(n^{4-\omega-\eps})$ for some $\eps>0$. When $m=O(n^{(\omega-1)/2})$, there exists a $\tilde{O}(n^{(9+\omega)/4})$ time algorithm. 
\end{lemma}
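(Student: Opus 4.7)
The plan is to reduce Lemma~\ref{lem:lem_multi_inc} to the main Min-Plus algorithm, Theorem~\ref{thm:low_rank_messy}, by first showing that $B$ satisfies the low-approximate-rank block condition required by that theorem, and then optimizing the parameters. The first step is a structural lemma on the $\Delta \times \Delta$ blocks of $B$; the second is an invocation of the main algorithm; the third is parameter balancing to obtain the two regime-dependent runtimes.

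For the structural lemma, I would partition $B$ into $\Delta \times \Delta$ blocks and argue that each block has small $W$-approximate rank. Fix a block $B^{a,b}$ with first column index $j_0$ and define the drop $D_{k,j} := B_{k,j_0} - B_{k,j}$ for $k \in I(a)$ and $j \in I(b)$. Row-monotonicity gives $D_{k,j} \ge 0$, and the column-sum condition $\sum_{k \in [n]}(B_{k,j}-B_{k,j+1}) \le m$, combined with non-negativity of each summand, restricts to $\sum_{k \in I(a)}(B_{k,j}-B_{k,j+1}) \le m$. Telescoping over at most $\Delta$ consecutive columns gives $\sum_{k \in I(a)} D_{k,j} \le m\Delta$ for every $j \in I(b)$. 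Hence the rank-$1$ approximation obtained by replicating $B(j_0)$ across the block has $\ell_\infty$ error at most $m\Delta$. To sharpen this, I would peel off the top-$t$ rows of the block with the largest $D_{k,j_{\text{last}}}$ values and represent them exactly; a pigeonhole argument on the column-sum bound yields that the $(t+1)$-st largest value is at most $m\Delta/t$, giving a rank-$(t+1)$ approximation with $\ell_\infty$ error $W = O(m\Delta/t)$ for any constant $t$.

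Next, I would plug this into Theorem~\ref{thm:low_rank_messy} with $d = t+1$ and $W = O(m\Delta/t)$. The runtime becomes $\tilde{O}\bigl(n^3/\Delta^{1/\lfloor(d+1)/2\rfloor} + \rho\sqrt{W}\,n^{(3+\omega)/2} + n^3/\rho\bigr)$, which I would optimize by selecting the right constant $d$, setting $\rho$ to equalize the first and third terms, and tuning $\Delta$ to balance with the middle term. In the regime $m = \Omega(n^{(\omega-1)/2})$, careful three-way balancing (with an appropriate choice of $d \ge 3$, so that $\lfloor(d+1)/2\rfloor \ge 2$ reduces the exponent on $\Delta$ in the first term) yields $\tilde{O}(n^{(14+\omega)/6} m^{1/6})$. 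In the smaller-$m$ regime $m = O(n^{(\omega-1)/2})$, the target $W$ drops below $1$, so the block approximation can be made exact (as an integer matrix), the middle term becomes $W^{1/4} n^{(9+\omega)/4} = \tilde{O}(n^{(9+\omega)/4})$, and this dominates.

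The main obstacle I expect is the precise parameter optimization. A naive choice of $d = 1$ with $W = m\Delta$ only gives $\tilde{O}(n^{(12+\omega)/5} m^{1/5})$, and using $d = 3$ with the peeled decomposition straightforwardly produces $\tilde{O}(n^{(15+\omega)/6} m^{1/6})$, both of which are weaker than the claimed bound. Shaving the extra factor down to the exponent $(14+\omega)/6$ requires a finer interplay between the geometric-data-structure exponent from Theorem~\ref{thm:geo} (which governs Phases~1 and~3 of the main algorithm) and the $\sqrt{W}\,n^{(3+\omega)/2}$ scaling of the Phase~2 Min-Plus product from Theorem~\ref{thm:small_weight_times_large_weight}; identifying the right rank, peel depth, and block size together so that the three terms line up at the stated runtime is where the bulk of the technical care lies.
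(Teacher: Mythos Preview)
Your structural analysis of the blocks is correct, and you have also correctly computed that the naive rank-$1$ bound $W=m\Delta$ (and even the peeled rank-$(t{+}1)$ bound $W=O(m\Delta/t)$ for constant $t$) yields exponents strictly worse than the claimed $(14+\omega)/6$. The gap is real and is not a matter of finer parameter tuning: as long as $W$ is tied to $m\Delta$ up to constants, no choice of constant $d$ in Theorem~\ref{thm:low_rank_messy} reaches the target, because peeling a \emph{constant} number of heavy rows does not change $W$ asymptotically.

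The idea you are missing is to decouple the error parameter from $m\Delta$ by handling the heavy-drop rows \emph{outside} the low-rank framework. Introduce a free threshold $\gamma$; by your own averaging argument, at most $m\Delta/\gamma$ indices $k$ in each column-strip $I(j')$ satisfy $B_{k,j'-\Delta+1}-B_{k,j'}\ge\gamma$. Replace those entries of $B$ by a large constant, obtaining $\hat B$ whose $\Delta\times\Delta$ blocks now have $\gamma$-approximate rank $1$, so Theorem~\ref{thm:low_rank} with $d=1$ computes $A\star\hat B$ in $\tilde O(n^3/\Delta+\gamma^{1/4}n^{(9+\omega)/4})$ time. The at most $nm\Delta/\gamma$ modified entries are then corrected by brute force in $O(n^{2}m\Delta/\gamma)$ time. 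Balancing the three terms with $\Delta=n^{(4-\omega)/6}m^{-1/6}$ and $\gamma=n^{(1-\omega)/3}m^{2/3}$ gives $\tilde O(n^{(14+\omega)/6}m^{1/6})$; when $m=O(n^{(\omega-1)/2})$ one takes $\gamma=1$ and $\Delta=n^{(3-\omega)/4}$. The point is that the number of ``bad'' rows removed is allowed to be polynomial in $n$, handled by a separate brute-force pass rather than by increasing the rank, which must stay $O(1)$.
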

\begin{proof}
Let $\Delta, \gamma \ge 1$ be small polynomials in $n$ to be fixed later. Fix $j'$ a multiple of $\Delta$. Since $\sum_{k=1}^n B_{k,j} - \sum_{k=1}^n B_{k,j+1} \le m$ for any $j \in I(j')$, we have
$$\sum_{k=1}^n B_{k,j
'-\Delta+1} - \sum_{k=1}^n B_{k,j'}\le \Delta m. $$
By averaging, there are at most $\Delta m / \gamma$ indices $k \in [n]$ such that $B_{k,j
'-\Delta+1}-B_{k,j'} \ge \gamma$. For each $k$ such that $B_{k,j'
-\Delta+1}-B_{k,j'}  \ge \gamma$, and for each $j \in I(j')$, we set $B_{k,j}$ as $M$, for some large enough integer $M$ (larger than all entries in $B$). We call the matrix $\hat{B}$ after we do this transformation for every $j'$. Note that $\hat{B}$ differs with $B$ in at most $n m \Delta / \gamma$ entries. 

Notice that $\hat{B}$ has the following nice property: for each $j', k'$ multiples of $\Delta$, $\left| \hat{B}_{j, k} - \hat{B}_{j', k} \right| \le \gamma$ for any $j \in I(j'), k \in I(k')$. Consider a set of $1$-dimensional vectors $X_{k',j'}(k) = [\hat{B}_{j', k}]$, and $Y_{k',j'}(j) = [1]$, then $\left| \hat{B}_{j, k} - X_{k',j'}(k)^T Y_{k',j'}(k) \right| \le \gamma$. Therefore, we can apply Theorem \ref{thm:low_rank} using $d=1$. This gives a
$$\tilde{O}(n^3  /\Delta + n^{(9+\omega)/4} \cdot \gamma^{1/4})$$
time algorithm to compute $\hat{C} = A \star \hat{B}$.

We can recover $C = A \star B$ from $\hat{C}$. Since $B$ and $\hat{B}$ differ in at most $nm \Delta / \gamma$ entries, and $\hat{B}$ is larger on these entries, we can enumerate $A_{i,k}+B_{k,j}$ to update $C_{i,j}$, where $B_{k,j}$ differs from $\hat{B}_{k,j}$. This will take $O(n^2 m \Delta / \gamma)$ time. 

The total complexity is $\tilde{O}(n^3  /\Delta +n^{(9+\omega)/4} \cdot \gamma^{1/4} + n^2 m \Delta / \gamma)$. 

When $m=\Omega(n^{(\omega-1)/2})$, we can balance by setting $\Delta = n^{(4-\omega)/6} m^{-1/6}$, and $\gamma = n^{(1-\omega)/3} m^{2/3}$. This gives a $\tilde{O}(n^{(14+\omega)/6} \cdot m^{1/6})$ time algorithm.

When $m=O(n^{(\omega-1)/2})$, we can balance by setting $\Delta = n^{(3-\omega)/4} $, and $\gamma = 1$ to get a $\tilde{O}(n^{(9+\omega)/4})$ time algorithm.
\end{proof}

\begin{theorem}
Given a sequence $a_1, a_2, \ldots, a_n$, and $n$ ranges $[l_1, r_1], [l_2, r_2], \ldots, [l_n, r_n]$, there exists a $\tilde{O}(n^{(27+2\omega)/(19+\omega)})$ time algorithm that computes the frequency of the most frequent element for each range $[l_i, r_i]$. Using $\omega \le 2.373$, this algorithm runs in $\tilde{O}(n^{1.4854})$ time. 
\end{theorem}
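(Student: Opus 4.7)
The plan is to apply a standard block decomposition with block size $s$ (to be optimized), and to reduce the precomputation of block-pair mode frequencies to the structured $(\min,+)$-product of Lemma~\ref{lem:lem_multi_inc}. Let $N=n/s$ and partition the sequence into $N$ consecutive blocks of length $s$. I would classify each distinct value as \emph{heavy} (total frequency $>s$) or \emph{light} (total frequency $\le s$); there are at most $N$ heavy values.

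For the heavy part, I would run a single structured $(\min,+)$-product. For all $1\le i\le j\le N$, define $F_{\mathrm{heavy}}[i,j]=\max_{x\text{ heavy}}\bigl(P_x[js]-P_x[(i-1)s]\bigr)$, where $P_x[\cdot]$ is the prefix count of $x$; after negation this is the $(\min,+)$-product of matrices $A$ and $B$ with $A[i,x]=P_x[(i-1)s]$ and $B[x,j]=-P_x[js]$, restricted to heavy $x$. Since there are at most $N$ such values I would pad both matrices to $N\times N$ using entries large enough in absolute value that they never participate in any minimum. The matrix $B$ has non-increasing rows (since $P_x[\cdot]$ is non-decreasing in $j$), and the sums of consecutive columns of $B$ differ by exactly the block length $s$, so Lemma~\ref{lem:lem_multi_inc} with $m=s$ computes $F_{\mathrm{heavy}}$ in $\tilde{O}\bigl((n/s)^{(14+\omega)/6}\,s^{1/6}\bigr)$ time; the eventual choice of $s$ will put us in the large-$m$ regime of that lemma.

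For the light part I would build an offline 2D range-max structure. For each light $x$ with occurrences $p_1<\dots<p_{c_x}$ ($c_x\le s$), insert a point $(p_i,p_j)$ with weight $j-i+1$ for every $i\le j$; the total number of points is $\sum_x\binom{c_x+1}{2}\le s\sum_x c_x=O(ns)$. Then $F_{\mathrm{light}}(l,r):=\max_{x\text{ light}}\operatorname{count}(x,[l,r])$ equals the maximum weight of a point with $p_i\ge l$ and $p_j\le r$, which an offline 2D range-max structure answers across all $n$ queries in $\tilde{O}(ns)$ time. For each query $[l,r]$ I would then combine the ingredients in the standard block-decomposition fashion: let $a,b$ be the blocks containing $l,r$, initialize the candidate with $\max(F_{\mathrm{heavy}}[a+1,b-1],F_{\mathrm{light}}(l,r))$, and then sweep each position $p$ in the partial segments $[l,as]\cup[(b-1)s+1,r]$, updating with $\operatorname{count}(a_p,[l,r])$ computed in $O(1)$ from prefix counts. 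This captures every candidate mode: if it lies entirely in the inner blocks it is recorded by $F_{\mathrm{heavy}}$ (heavy case) or by $F_{\mathrm{light}}$ (light case), and otherwise it has at least one occurrence in a partial segment where I iterate over it. The per-query work is $O(s)$, summing to $O(ns)$.

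The total runtime is $\tilde{O}\bigl(ns+(n/s)^{(14+\omega)/6}s^{1/6}\bigr)$; setting $s=n^{(8+\omega)/(19+\omega)}$ balances the two summands to the claimed $\tilde{O}(n^{(27+2\omega)/(19+\omega)})$. The main obstacle is that the natural $(\min,+)$-formulation has inner dimension equal to the number of distinct values, which can be as large as $n$, whereas Lemma~\ref{lem:lem_multi_inc} is charged in terms of the square matrix size; splitting that dimension into chunks of size $N$ and invoking Lemma~\ref{lem:lem_multi_inc} on each chunk independently would cost an extra factor of $s$ and only yield the weaker exponent $(21+2\omega)/(13+\omega)\approx 1.675$. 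The heavy/light dichotomy is precisely what keeps the effective inner dimension at $N$ on the expensive heavy side, while the 2D range-max absorbs the abundant but individually short light values in the remaining budget.
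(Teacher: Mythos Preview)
Your proposal is correct and follows essentially the same approach as the paper: block decomposition with parameter $s$, a heavy/light split at threshold $s$, Lemma~\ref{lem:lem_multi_inc} applied to an $N\times N$ prefix-count $(\min,+)$ instance for the heavy values, a weighted-point/interval data structure for light values, and a sweep of the two partial blocks per query, with the identical final balance $s=n^{(8+\omega)/(19+\omega)}$. The only notable difference is that the paper first wraps everything in a divide-and-conquer so every query straddles the midpoint and builds the two Min-Plus matrices as ``counts from block $s$ to the middle'' and ``counts from the middle to block $t$'', whereas your direct prefix-count matrices $A[i,x]=P_x[(i-1)s]$, $B[x,j]=-P_x[js]$ avoid that layer; one small slip is that $\operatorname{count}(a_p,[l,r])$ is $O(\log n)$ via binary search on occurrence lists rather than $O(1)$ (building full prefix arrays per value would cost $O(n^2/s)$), but this is absorbed in the $\tilde O$.
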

\begin{proof}

Without loss of generality, we assume $l_i \le n/2 < r_i$. Otherwise, we can use a divide-and-conquer approach to first compute the queries that satisfy $l_i \le n/2 < r_i$, then recurse on the two halves $[1,n/2]$ and $(n/2,n]$ to compute answers. Since the proposed time complexity is $\Omega(n^{1+\eps})$ for some $\eps > 0$, the total time complexity does not change by the Master Theorem. 

Let $T$ be a parameter of the algorithm that controls the block size as well as a threshold frequency for frequent elements and infrequent elements. 
We handle elements that appear at most $T$ times (infrequent elements), and elements that appear more than $T$ times (frequent elements) differently. 

Fix some infrequent elements $x$. For any $a_j = a_k = x$ where $j \le k$, we create an interval $[j, k]$, whose weight is the number of occurrence of $x$ in the range $[j, k]$. Since $x$ occurs at most $T$ times, the number of of such intervals is at most $O(Tn)$. To query the largest frequency in a range $[l_i, r_i]$, it is equivalent to ask the largest weight of intervals that are contained in the interval $[l_i, r_i]$. This problem can be solved by, for instance, using a persistent balanced search tree, in $\tilde{O}(Tn)$ preprocess time and $\tilde{O}(1)$ query time. 

Now consider the ``frequent'' elements in the array that occur more than $T$ times. There are at most $n/T$ distinct frequent elements in the array. For each of these elements $x$, we create a balanced binary search tree $B_x$, whose elements are the set of occurrences $\{i: a_i = x\}$, augmented with the size of the subtree rooted at each node. 
We split the whole sequence $a_1,\ldots,a_n$ into consecutive blocks of size $O(T)$, so that $n/2$ is the right boundary of one block and the left boundary of the next block. 

For a range $[l_i, r_i]$, let $S_s, S_{s+1}, \ldots, S_{t}$ be the maximum set of blocks in this range, then the range mode of $[l_i, r_i]$ is either the range mode of the subinterval $S_s, S_{s+1}, \ldots, S_{t}$, or some elements in $[l_i, r_i] \setminus \left\{S_s, S_{s+1}, \ldots, S_{t}\right\}$.

Suppose that the range mode of $[l_i,r_i]$ is not the range mode of $S_s, S_{s+1}, \ldots, S_{t}$.
Then, we have a candidate list of $O(T)$ numbers (those to the left and right of $S_s, S_{s+1}, \ldots, S_{t}$ in $[l_i,r_i]$)  that can possibly be the range mode of the interval $[l_i, r_i]$. For each of these numbers $x$, we can query its occurrence in the range $[l_i, r_i]$ by querying  the number of elements between $[l_i, r_i]$ in $B_x$ which takes $O(\log n)$ time due to the augmentation. 

Therefore, it takes $\tilde{O}(T)$ overhead to compute the range mode of $[l_i, r_i]$ once we know the range mode of $S_s, S_{s+1}, \ldots, S_{t}$. Thus, we can focus on the sub-problem of computing the range mode of the subinterval $S_s, S_{s+1}, \ldots, S_{t}$, where $S_s$ is to the left of $n/2$, and $S_t$ is to the right of $n/2$ and some pair of blocks $S_{i*},S_{i*+1}$ end and start (respectively) at $n/2$. Call these last two the middle blocks.

We create two matrices $A$ and $B$. The columns of $A$ and rows of $B$ are indexed by the heavy elements in $a_1,\ldots,a_n$. The rows of $A$ and columns of $B$ are indexed by $j$ such that $S_j$ is one of the blocks of size $T$ that we partitioned $a_1,\ldots, a_n$ into. Hence both $A$ and $B$ are $O(n/T)$ by $O(n/T)$ matrices.

More concretely, for each $S_s$ to the left of $n/2$, we create a row $s$ in matrix $A$, where $A_{s,k}$ is the negated number of occurrences of element $k$ in the subinterval $S_s, \ldots, S_{i*}$ (recall that $S_{i*}$ ends at $n/2$); for each $S_t$ to the right of $n/2$, we create column $t$ in matrix $B$ where $B_{k,t}$ is the negated number of occurrences of element $k$ in the subinterval  $S_{i*+1},\ldots, S_t$ (recall that $S_{i*+1}$ starts at $n/2+1$). Therefore, the negated Min-Plus product entry $-(A\star B)_{s,t}$ will be the range mode in the full subinterval $S_s, S_{s+1}, \ldots, S_{t}$. 

Note that $A, B$ are $O(n/T)$ by $O(n/T)$ matrices, each row of $B$ is monotonically non-increasing, and the difference between the $i$-th column and $(i+1)$-th column is at most $T$. Therefore, we can apply Lemma~\ref{lem:lem_multi_inc} to multiply $A \star B$ in $\tilde{O}((n/T)^{(14+\omega)/6} T^{1/6})$ time when $T = \Omega((n/T)^{(\omega-1)/2})$.

Therefore, the overall running time of the algorithm is $\tilde{O}((n/T)^{(14+\omega)/6} T^{1/6} + nT)$. By setting $T=n^{{(8+\omega)/(19+\omega)}}$, we get a $\tilde{O}(n^{(27+2\omega)/(19+\omega)})$ time algorithm.
\end{proof}


\section{Application III: Maximum Subarray with Bounded Entries}


In \cite{tamaki1998algorithms}, Tamaki and Tokuyama reduced 2D maximum subarray problem to $(\min, +)$-product of two matrices $A$, $B$, using a divide-and-conquer approach. In this reduction, if the absolute values of the entries of the input array are bounded by $M$, then the matrix $A$ has the property that $$\forall i, j, \left| A_{i+1,j+1} - A_{i, j+1} - A_{i+1,j} + A_{i,j} \right| \le M.$$
The same property holds for $B$ as well. If we can compute $(\min, +)$-product of matrices with this property in sub-cubic time, then we can solve the maximum subarray problem with bounded entry in sub-cubic time as well. 

Motivated by this application, we define the following notion of finite difference operator. 
\begin{definition}
The finite difference operator $\mathcal{D}$ acts on a matrix such that
$$(\mathcal{D}A)_{i,j} = A_{i+1,j+1} -A_{i, j+1} - A_{i+1,j} + A_{i, j}.$$
\end{definition}
Using this definition, we can rephrase the property of matrices related with the maximum subarray problem as $|(\mathcal{D}A)_{i,j}| \le M$. 

In the rest of this section, we will show how to compute $A \star B$ in sub-cubic time when $\left| (\mathcal{D}^t B)_{i,j} \right| \le M$ for some constant $t$. The following lemma shows that matrices with bounded entries after the operator $\mathcal{D}^t$ can be approximated with a low rank matrix. 

\begin{lemma}\label{lem:D^k}
For an arbitrary matrix $B$ where $| (\mathcal{D}^t B)_{i,j}| \le M$,  there exist $2n$ integer vectors of $(2t)$-dimension  $X(1), X(2), \ldots, X(n)$ and $Y(1), Y(2), \ldots, Y(n)$, such that $\left| B_{i,j}-X(i)^TY(j)\right| = O(M n^{2t})$.
 
\end{lemma}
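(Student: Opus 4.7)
The plan is to approximate $B$ by an integer matrix $B^{(r)}$ that lies in the kernel of $\mathcal{D}^t$ and agrees with $B$ on the first $t$ rows and the first $t$ columns, and then to exhibit an integer rank-at-most-$2t$ decomposition of $B^{(r)}$ directly. The approximation error will come from ``integrating'' $\mathcal{D}^t B$ twice with zero boundary data.

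For the construction, I would use the ansatz
$$B^{(r)}_{i,j} \;=\; \sum_{a=0}^{t-1}\binom{i-1}{a}\alpha_a(j) \;+\; \sum_{b=0}^{t-1}\beta_b(i)\binom{j-1}{b},$$
with the convention $\beta_b(i)=0$ for $i \le t$. Writing $\Delta_i,\Delta_j$ for the forward differences in each index, we have $\Delta_i^t \binom{i-1}{a}=0$ for $a<t$ and $\Delta_j^t \binom{j-1}{b}=0$ for $b<t$, so this ansatz automatically gives $\mathcal{D}^t B^{(r)}=0$. Matching $B$ on the first $t$ rows reduces to $\sum_{a=0}^{t-1}\binom{i-1}{a}\alpha_a(j)=B_{i,j}$ for $i=1,\dots,t$, a unit lower triangular system in $(i,a)$ with integer solution $\alpha_a(j)=\sum_{k=0}^{a}(-1)^{a-k}\binom{a}{k}B_{k+1,j}$. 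Matching on the first $t$ columns then yields the analogous unit triangular system for the $\beta_b(i)$ with $i>t$, applied to the integer residual $B_{i,j}-\sum_a \binom{i-1}{a}\alpha_a(j)$, so $\beta_b(i)\in\mathbb{Z}$. The desired integer vectors are
$$X(i)=\left(\binom{i-1}{0},\dots,\binom{i-1}{t-1},\beta_0(i),\dots,\beta_{t-1}(i)\right),\ \ Y(j)=\left(\alpha_0(j),\dots,\alpha_{t-1}(j),\binom{j-1}{0},\dots,\binom{j-1}{t-1}\right),$$
both in $\mathbb{Z}^{2t}$, with $X(i)^T Y(j)=B^{(r)}_{i,j}$.

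To bound $E:=B-B^{(r)}$, I would use that $E$ vanishes on the first $t$ rows and columns while $\mathcal{D}^t E=\mathcal{D}^t B$ has entries bounded by $M$. The estimate follows from the Newton antidifference formula applied in two stages. Setting $F=\Delta_j^t E$, observe $F$ still vanishes for $i\le t$ and $\Delta_i^t F=\mathcal{D}^t B$ is bounded by $M$; so Newton's formula $F_{i,j}=\sum_{i'=1}^{i-t}\binom{i-1-i'}{t-1}(\Delta_i^t F)_{i',j}$ with $\binom{i-1-i'}{t-1}=O(n^{t-1})$ and at most $n$ summands gives $|F|_\infty=O(M n^t)$. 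Antidifferencing once more in $j$, using $E_{i,j}=0$ for $j\le t$ and $|\Delta_j^t E|_\infty=|F|_\infty=O(Mn^t)$, yields $|E|_\infty=O(Mn^t\cdot n^t)=O(Mn^{2t})$, which is exactly the promised bound on $|B_{i,j}-X(i)^T Y(j)|$.

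The main obstacle is arranging that $\alpha_a$ and $\beta_b$ come out integer rather than merely rational: the $2t$-dimensional rank bound is easy once one knows that the kernel of $\mathcal{D}^t$ decomposes into ``degree $<t$ in $i$'' and ``degree $<t$ in $j$'' pieces, but one must choose the ansatz so that both subsystems that determine the free functions have unit triangular matrices of binomial coefficients. The rest of the argument (verifying $\mathcal{D}^t B^{(r)}=0$ and the two-stage antidifference bound) is routine discrete calculus.
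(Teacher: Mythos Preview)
Your argument is correct and yields exactly the stated bound, but it takes a different route from the paper. The paper proceeds by induction on $t$: it sets $A=\mathcal{D}B$, invokes the inductive hypothesis to get $(2t-2)$-dimensional integer vectors $P(i),Q(j)$ with $|A_{i,j}-P(i)^TQ(j)|=O(Mn^{2t-2})$, and then uses the one-step integration identity
\[
B_{i,j}=\Bigl(\sum_{a<i}\sum_{b<j}A_{a,b}\Bigr)-B_{1,1}+B_{i,1}+B_{1,j}
\]
to pass back to $B$, appending two coordinates (carrying the boundary data $B_{i,1}$ and $B_{1,j}$) and absorbing an extra factor of $n^2$ into the error from the double sum of residuals. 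Your construction is direct rather than inductive: you write down an explicit rank-$2t$ element of $\ker\mathcal{D}^t$ that interpolates $B$ on the first $t$ rows and columns, and then bound the remainder in one shot via a two-stage order-$t$ Newton antidifference. Your version makes the structure of the approximant more visible (it is the unique ``$\mathcal{D}^t$-harmonic'' extension of that boundary data) and makes integrality transparent through the unit-triangular binomial systems; the paper's inductive argument is shorter to write and avoids stating the antidifference formula explicitly. Both give the same $O(Mn^{2t})$ error, and for the downstream use in Theorem~\ref{thm:DB_multiply} the two constructions are interchangeable.
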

\begin{proof}
We prove this by induction on $t$. When $t=0$, the claim is trivially true.

When $t>0$, assume the claim is true for $t-1$. Let $A = \mathcal{D} B$. Since $\mathcal{D}^{t-1} A = \mathcal{D}^t B$, by induction, there exists $(2t-2)$-dimension vectors $P(i), Q(j)$ such that $\left| A_{i,j} - P(i)^TQ(j)\right| = O(M n^{2t-2})$. Define $E_{i,j}= A_{i,j} - P(i)^TQ(j)$ to be the error term, whose absolute value is bounded by $O(Mn^{2t-2})$. Since $A = \mathcal{D} B$, 
\begin{equation*}
\begin{split}
B_{i,j} &= \left(\sum_{a = 1}^{i-1} \sum_{b=1}^{j-1} A_{a,b}\right) - B_{1,1}+B_{i,1}+B_{1,j}\\
&= \left(\sum_{a = 1}^{i-1} \sum_{b=1}^{j-1} \left(P(a)^TQ(b) + E_{a,b}\right)\right) - B_{1,1}+B_{i,1}+B_{1,j}\\
&= \left(\sum_{a = 1}^{i-1}P(a) \right)^T \left( \sum_{b=1}^{j-1} Q(b)\right) - B_{1,1}+B_{i,1}+B_{1,j} + \left(\sum_{a = 1}^{i-1} \sum_{b=1}^{j-1} E_{a,b}\right)\\
&= 
\begin{bmatrix}
1\\
-B_{1,1}+B_{i,1}\\
\sum_{a = 1}^{i-1}P(a)
\end{bmatrix}^T
\begin{bmatrix}
B_{1,j}\\
1\\
\sum_{b = 1}^{j-1}Q(b)
\end{bmatrix}
+ \left(\sum_{a = 1}^{i-1} \sum_{b=1}^{j-1} E_{a,b}\right)
\end{split}
\end{equation*}
Therefore, if we set $$X(i):= \begin{bmatrix}
1\\
-B_{1,1}+B_{i,1}\\
\sum_{a = 1}^{i-1}P(a)
\end{bmatrix} 
, ~\text{  and  } ~Y(j):= \begin{bmatrix}
B_{1,j}\\
1\\
\sum_{a = 1}^{i-1}Q(a)
\end{bmatrix},
$$
we will have
$$\left|B_{i,j} - X(i)^T Y(j)\right| = \left|\sum_{a=1}^{i-1} \sum_{b=1}^{j-1} E_{a,b}\right| = O(Mn^{2t}),$$
which completes the induction. 
\end{proof}

\begin{theorem}\label{thm:DB_multiply}
For two integer matrices $A$ and $B$, if $| (\mathcal{D}^t B)_{i,j}| \le M$ for some constant $t \ge 1$, then there exists an algorithm that computes $A \star B$ in $\tilde{O}(n^{3-\frac{3-\omega}{2t^2+4}} M^{1/(2t^2+4)})$ time. 
\end{theorem}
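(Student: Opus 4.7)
The plan is to reduce Theorem~\ref{thm:DB_multiply} to Theorem~\ref{thm:low_rank} by applying Lemma~\ref{lem:D^k} blockwise. I would partition $B$ into $n^\delta \times n^\delta$ sub-blocks $B^{a,b}$ for a parameter $\delta$ to be fixed at the end. The key observation is that $\mathcal{D}^t$ is a \emph{local} operator: $(\mathcal{D}^t B)_{i,j}$ depends only on a constant-sized window around $(i,j)$, so the hypothesis $|(\mathcal{D}^t B)_{i,j}| \le M$ passes to every sub-block. Applying Lemma~\ref{lem:D^k} to each $n^\delta \times n^\delta$ sub-block then yields $(2t)$-dimensional integer vectors $X_{a,b}, Y_{a,b}$ whose outer products approximate the block entries up to additive error $O(M \cdot (n^\delta)^{2t}) = O(Mn^{2t\delta})$. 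In other words, every sub-block of $B$ has $W$-approximate rank at most $2t$ with $W = O(Mn^{2t\delta})$.

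Next I would invoke Theorem~\ref{thm:low_rank} with $d = 2t$, noting $\lfloor (d+1)/2 \rfloor = t$, giving running time
\[
\tilde{O}\!\left(n^{3 - \delta/t} + W^{1/4}\, n^{(9+\omega)/4}\right)
= \tilde{O}\!\left(n^{3 - \delta/t} + M^{1/4}\, n^{(9+\omega)/4 + t\delta/2}\right).
\]
To finish, balance the two terms by equating their exponents, i.e.\ $3 - \delta/t = (9+\omega)/4 + t\delta/2 + (\log_n M)/4$, which yields
\[
\delta \;=\; \frac{t\,(3-\omega - \log_n M)}{2(t^2+2)}.
\]
Plugging back in gives a runtime of $\tilde{O}(n^{3 - (3-\omega - \log_n M)/(2t^2+4)}) = \tilde{O}(n^{3 - (3-\omega)/(2t^2+4)}\, M^{1/(2t^2+4)})$, matching the claim. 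The edge cases $M \ge n^{3-\omega}$ (where the bound exceeds $n^3$ and one runs the trivial algorithm) and $\delta > 1$ (handled by using $\delta = 1$) are standard.

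I do not anticipate a serious obstacle here; the locality of $\mathcal{D}^t$ and the algebraic balancing are both routine. The only step that deserves a sentence of care is verifying that the approximating vectors required by Theorem~\ref{thm:low_rank} can actually be \emph{constructed} for each block within the budget. This follows directly from the explicit inductive construction in the proof of Lemma~\ref{lem:D^k}: each level simply takes prefix sums of the previous level's vectors along the appropriate row/column, so the vectors for each $n^\delta \times n^\delta$ block are computable in $\tilde{O}(n^{2\delta})$ time, giving $\tilde{O}(n^2)$ total preprocessing — negligible compared to the main term.
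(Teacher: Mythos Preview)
Your proposal is correct and follows essentially the same argument as the paper: apply Lemma~\ref{lem:D^k} to each $\Delta\times\Delta$ block (with $\Delta=n^\delta$) to certify $O(M\Delta^{2t})$-approximate rank $\le 2t$, invoke Theorem~\ref{thm:low_rank} with $d=2t$ so that $\lfloor(d+1)/2\rfloor=t$, and balance the two terms; the paper's choice $\Delta=(n^{(3-\omega)/2}M^{-1/2})^{t/(t^2+2)}$ is exactly your $\delta=\frac{t(3-\omega-\log_n M)}{2(t^2+2)}$. Your added remarks on edge cases and on the $\tilde O(n^2)$ cost of constructing the block vectors are fine and not addressed explicitly in the paper.
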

\begin{proof}
Let $\Delta$ be a small polynomial in $n$. For any $\Delta \times \Delta$ sub-matrix of $B$, the $t$-th discrete difference is also bounded by $M$. Therefore, by Lemma \ref{lem:D^k}, for each $i', j'$ multiples of $\Delta$, there exist $2t$-dimensional vectors $X_{i',j'}(i), Y_{i',j'}(j)$ such that $X_{i',j'}(i)^T Y_{i',j'}(j)$ is an $O(M \Delta^{2t})$-additive approximation of $B_{i,j}$. In other word, every $\Delta \times \Delta$ sub-matrices of $B$ has an $O(M \Delta^{2t})$-approximate rank at most $2t$. Therefore, we can apply Theorem \ref{thm:low_rank} to get an algorithm that computes $A \star B$ in time
$$\tilde{O}(n^3  \cdot \Delta^{-1/\lfloor (2t+1)/2 \rfloor} + n^{(9+\omega)/4}\cdot (M \Delta^{2t})^{1/4}).$$
By setting $\Delta = \left(n^{(3-\omega)/2}\cdot M^{-1/2}\right)^{\frac{t}{t^2+2}}$, we get a $\tilde{O}(n^{3-\frac{3-\omega}{2t^2+4}} M^{1/(2t^2+4)})$ time algorithm. 
\end{proof}

\begin{corollary}
Given an $n \times n$ array $A$, where the absolute value of each entry is bounded by $M$. There exists an algorithm that finds the maximum subarray of $A$ in $\tilde{O}(n^{\frac{15+\omega}{6}} M^{1/6})$ time. Use $\omega < 2.373$, this gives an $\tilde{O}(n^{2.8955} M^{1/6})$ time algorithm, which is truly subcubic when $M=o(n^{0.627})$. 
\end{corollary}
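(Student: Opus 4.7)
The plan is to combine the Tamaki--Tokuyama divide-and-conquer reduction (the one already discussed just before Definition of $\mathcal{D}$ in this section) with Theorem~\ref{thm:DB_multiply} specialized to $t=1$. Concretely, the Tamaki--Tokuyama reduction shows that the Maximum Subarray problem on an $n\times n$ array with entries bounded in absolute value by $M$ reduces to $O(\log n)$ instances of Min-Plus product of $n\times n$ integer matrices $A,B$ that satisfy $|(\mathcal{D}A)_{i,j}|,|(\mathcal{D}B)_{i,j}|\le M$, together with $\tilde O(n^2)$ additional bookkeeping per recursive level. In particular, the running time of the Maximum Subarray algorithm is the running time of one such Min-Plus product, up to a $\polylog n$ factor that is absorbed in $\tilde O(\cdot)$.

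The second step is to invoke Theorem~\ref{thm:DB_multiply} with $t=1$ on the (arbitrary) matrix $A$ and on $B$, which satisfies $|(\mathcal{D}^1 B)_{i,j}|\le M$. The theorem gives a running time of
\[
\tilde O\!\left(n^{3-\tfrac{3-\omega}{2t^2+4}}\,M^{1/(2t^2+4)}\right)\Big|_{t=1}
=\tilde O\!\left(n^{3-(3-\omega)/6}\,M^{1/6}\right)
=\tilde O\!\left(n^{(15+\omega)/6}\,M^{1/6}\right),
\]
which is exactly the bound claimed by the corollary.

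Finally, I would verify the ``truly subcubic'' condition. We need $n^{(15+\omega)/6}M^{1/6}=o(n^3)$, i.e.\ $M=o(n^{18-(15+\omega)})=o(n^{3-\omega})$. Plugging in $\omega<2.373$ gives $3-\omega>0.627$, so the bound holds whenever $M=o(n^{0.627})$, matching the statement. Using Tamaki--Tokuyama's reduction is symmetric in its treatment of $A$ and $B$, and swapping roles if convenient does not affect the argument, since Theorem~\ref{thm:DB_multiply} only requires one of the two matrices to have bounded $\mathcal{D}$.

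There is essentially no hard step here: the heavy lifting sits in Theorem~\ref{thm:DB_multiply} (and hence in Theorem~\ref{thm:low_rank}, via Lemma~\ref{lem:D^k}), and in the prior work of Tamaki--Tokuyama. The only point that requires a moment's care is confirming that the Tamaki--Tokuyama reduction incurs at most a polylogarithmic overhead on top of a single $n\times n$ Min-Plus product of matrices with the $\mathcal{D}$-bounded property --- so that the $O(\log n)$ Min-Plus products produced by the divide-and-conquer can each be handled by Theorem~\ref{thm:DB_multiply} and summed without changing the exponent. Once that is observed, the corollary is immediate.
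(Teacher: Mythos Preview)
Your proposal is correct and follows exactly the paper's approach: invoke the Tamaki--Tokuyama reduction and then apply Theorem~\ref{thm:DB_multiply} with $t=1$. The paper's own proof is a single sentence to this effect, and your additional bookkeeping (the $O(\log n)$ overhead and the $M=o(n^{3-\omega})$ threshold check) is accurate elaboration.
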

\begin{proof}
We can use Tamaki and Tokuyama's reduction in \cite{tamaki1998algorithms}, and apply Theorem \ref{thm:DB_multiply} using $t=1$ to immediately get this result. 
\end{proof}

\subsection{Tight Lower Bound for $d$-Dimensional Maximum Subarray}\label{sec:lower_bound}
In this section, we show  the conditional lower bound for the $d$-Dimensional Maximum Subarray problem, where the entries of the input array can have arbitrary real values. Backurs et al. \cite{backurs2016tight} showed an $n^{d+\lfloor d/2\rfloor -o(1)}$ conditional lower bound for $d$-Dimensional Maximum Subarray, based on the hardness of the Max-Weight $(d+\lfloor d/2\rfloor)$-Clique problem. Their lower bound is only tight for $d=2$, since Kadane's algorithm for $d$-Dimensional Maximum Subarray runs in $O(n^{2d-1})$ time. 

We show an $n^{2d-1-o(1)}$ conditional lower bound for the $d$-Dimensional Maximum Subarray problem, based on the hardness of the Max-Weight $(2d-1)$-Clique problem. In our reduction, we will introduce two intermediate problems defined as following. 

\begin{definition}[Two-sided $d$-Uniform Hypergraph]
A complete hyperedge-weighted $d$-uniform hypergraph whose vertex set is partitioned into $2d$ sets $U_1, U_2, \ldots, U_d,$ $V_1, V_2, \ldots, V_d$, each with $n$ vertices is {\em two-sided} if any $d$-hyperedge $(w_1,\ldots, w_d)$ not in the form of $w_1 \in U_1 \cup V_1, w_2 \in U_2 \cup V_2, \ldots, w_d \in U_d \cup V_d$, has zero weight. 
\end{definition}

\begin{definition}[Two-sided $d$-Uniform Max-Weight Hyperclique]
\label{dfn:two_sided_hyper}
Given a two-sided $d$-uniform hypergraph, find one vertex from each vertex set, so that the sum of hyperedge weights between these vertices is maximized. 
\end{definition}

\begin{definition}[Central $d$-Dimensional Array]
A $d$-dimensional array $A$ with side length $2n+1$ is called a central array if the index set of it is $\{-n, -n+1, \ldots, n-1, n\}^d$. 
\end{definition}

\begin{definition}[Central Maximum Subarray Sum]\label{dfn:central_subarray_prob}
Given a central $d$-dimensional array $A$, find 
$$\max_{\substack{i \in [n]^d \\ \delta \in [2n]^d\\ -n \le i-\delta < 0}} \sum_{j \in \{0,1\}^d} A[i - \delta \odot	 j],$$
where $\odot$ denotes the componentwise product of two vectors.
\end{definition}
Central Maximum Subarray Sum asks to find a subarray whose $2^d$ corners are in each of the $2^d$ quadrants, such that the sum of values on its corners is maximized. Backurs et al. \cite{backurs2016tight} showed an $O(n^d)$ time reduction from the Central Maximum Subarray Sum problem to the Maximum Subarray problem in $d$-dimension. Thus, any (higher than $n^d$) lower bound for the Central Maximum Subarray Sum problem would imply the same lower bound for the Maximum Subarray problem. In the rest of this section, we will first show a reduction from Max-Weight $(2d-1)$-Clique problem to Two-sided $d$-Uniform Max-Weight Hyperclique problem, and then show a reduction from the Two-sided $d$-Uniform Max-Weight Hyperclique problem to the Central Maximum Subarray Sum problem. If the well-known Max-Weight $(2d-1)$-Clique Hypothesis is true, the Central Maximum Subarray Sum problem would have an $n^{2d-1-o(1)}$ lower bound, and thus the Maximum Subarray problem would share the $n^{2d-1-o(1)}$ lower bound due to Backurs et al.'s reduction. 

\begin{lemma}\label{lem:clique_to_hyper}
If there exists an $O(n^{2d-1-\epsilon})$ time algorithm (for $\epsilon>0$) for the Two-sided $d$-Uniform Max-Weight Hyperclique problem, then there exists an $O(n^{2d-1-\epsilon})$ time algorithm for Max-Weight $(2d-1)$-Clique problem.
\end{lemma}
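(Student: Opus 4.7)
The plan is to give a direct reduction from Max-Weight $(2d-1)$-Clique on an $n$-vertex graph $G=(V,E)$ with integer weights $W$ to Two-Sided $d$-Uniform Max-Weight Hyperclique on $\tilde{O}(n)$ vertices per set; this will yield, after absorbing any polylog blowup into a slightly smaller $\epsilon$, the desired $O(n^{2d-1-\epsilon})$ algorithm for Max-Weight $(2d-1)$-Clique.

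First I would take each of the $2d$ sets $U_i,V_i$ of the hyperclique instance to be a (possibly labelled) copy of $V(G)$ and impose a single equality constraint $u_d=v_1$, enforced by a large negative-weight term placed in every hyperedge that contains both $u_d$ and $v_1$ when they are unequal. Under this constraint, an optimum selection of $2d$ hyperclique vertices corresponds to $2d-1$ distinct clique vertices $c_1=u_1,\ldots,c_d=u_d=v_1,c_{d+1}=v_2,\ldots,c_{2d-1}=v_d$. The hyperedge weights are then designed so that the sum of the $2^d$ standard hyperedges on any satisfying selection equals the weight of the $(2d-1)$-clique on $\{c_1,\ldots,c_{2d-1}\}$ up to a known additive constant: every clique edge whose endpoints sit in different pair-of-partitions $U_\ell \cup V_\ell$ is captured by distributing its weight evenly across the $2^{d-2}$ hyperedges that contain both endpoints, and the would-be same-pair edges incident to $c_d$ are rescued via the alternative position of the duplicated vertex.

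The main obstacle, which I anticipate being the technical core of the proof, is handling the $d-2$ clique edges of the form $(u_k,v_k)$ with $k \in \{2,\ldots,d-1\}$. Both endpoints lie in the same pair $k$ and neither has a duplicate, so no single standard hyperedge contains both; moreover a short separability calculation shows that $\sum_\epsilon w_\epsilon$ has no joint $(u_k,v_k)$ dependence, which rules out encoding these weights inside any single hyperedge directly. To overcome this I would append a short label to each vertex in $U_k$ and $V_k$ and impose further penalty-style hyperedge weights that force the label of a chosen vertex in some pair $\ell \ne k$ to encode the identity of $v_k$; the missing weight $W(u_k,v_k)$ then becomes a joint function of $u_k$ together with a surrogate $v_k$ carried in another pair's label, which is expressible inside a single hyperedge. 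After making this label-propagation scheme globally consistent while keeping each set size $\tilde{O}(n)$, subtracting the known penalty constants from the optimum hyperclique weight recovers the Max-Weight $(2d-1)$-Clique answer.
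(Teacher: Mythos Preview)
Your setup is sound and you correctly isolate the crux: once you place the $2d-1$ clique vertices into the $2d$ parts, the clique edges $(u_k,v_k)$ whose endpoints fall in the same pair $k$ cannot be charged to any single two-sided hyperedge. The gap is in your proposed fix. You spend the one ``spare'' part on the equality $u_d=v_1$, which rescues only the two same-pair edges touching the duplicated vertex, and then you appeal to a label-propagation scheme to handle the remaining $d-2$ edges. But to ``force the label of a chosen vertex in some pair $\ell\neq k$ to encode the identity of $v_k$'' you need that label to distinguish $n$ values, which multiplies $|U_\ell|$ (or $|V_\ell|$) by $n$; distributing the bits of $v_k$ across several parts does not help either, since you must do this for each of the $d-2$ problematic $k$'s and the total label information is $\Theta((d-2)\log n)$ bits, again forcing some part to have size $n^{1+\Omega(1)}$. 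So ``keeping each set size $\tilde O(n)$'' is exactly the step you have not supplied, and a straightforward implementation of your scheme yields a reduction whose running time exceeds $n^{2d-1}$.

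The paper's proof avoids this by using the spare part very differently: instead of duplicating a clique vertex, it lets the $2d$-th part carry the \emph{XOR} $s_{d+1}\oplus\cdots\oplus s_{2d-1}$ of the indices chosen in the second half. This is a single size-$n$ parity set that simultaneously serves as a surrogate for \emph{every} same-pair edge: for each problematic pair $(v_i,v_{i+d})$, the hyperedge that picks $u_i$ from pair $i$ and picks from $U_{d+1},\ldots,U_{2d}$ in all other pairs (thus including the XOR vertex but excluding $u_{i+d}$) implicitly determines $s_{i+d}$ from the XOR and the other $s_j$'s, so $w(v_i,v_{i+d})$ can be written on it. A large negative weight on the hyperedge $(u_{d+1},\ldots,u_{2d})$ enforces the XOR consistency. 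In short, the missing idea is that one parity coordinate lets you recover any one of $d-1$ indices from the rest, whereas your duplicate only recovers one fixed index; replacing your equality constraint by this XOR constraint (and dropping the extra labels) completes the argument with all parts of size exactly $n$.
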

\begin{proof}
Let $G=(V_1 \cup V_2 \cup \cdots \cup V_{2d-1}, E)$ be a $(2d-1)$-partite graph. We will construct a Two-sided $d$-Uniform Hypergraph $G'=(U_1 \cup U_2 \cup \cdots \cup U_{2d}, E')$ such that the maximum $(2d-1)$-clique weight of $G$ is equal to the maximum $(2d)$-hyperclique weight of $G'$. For simplicity, assume $n$ is a power of $2$, and we will index the vertices in each vertex set from $0$. 

The first $2d-1$ vertex sets of $G'$ are copies of the vertex sets of $G$. Specifically, $U_i$ is a copy of $V_i$ for any $i \le 2d-1$. $U_{2d}$, however, encodes something different. Assume we pick $v_i \in V_i$ to be the $s_i$-th vertex in $V_i$, then intuitively, $U_{2d}$ encodes $s_{d+1} \oplus s_{d+2} \oplus \cdots \oplus s_{2d-1}$, where $\oplus$ is the bitwise exclusive-or operation. 

We initialize all hyperedge weights of $G'$ to $0$, and increase these weights incrementally by considering edges of $G$ one by one. 

For any $1 \le i < j \le 2d-1$, pick an edge $(v_i, v_j) \in V_i \times V_j$, with weight $w(v_i, v_j)$. Let $u_i, u_j$ be the copies of $v_i, v_j$ in the hypergraph $G'$. First consider the case when $j \ne i+d$. This is the case when there exist arbitrarily weighted hyperedges that contain both $u_i$ and $u_j$. Let $S := \{k \in [d]: k \not \equiv i \pmod{d} \text{ and } k \not \equiv j \pmod{d}\}$. We enumerate every $n^{d-2}$ combinations of vertices $u'_k \in U_k$ for $k \in S$, and add $w(v_i, v_j)$ to the hyperedge between the $d$ vertices $u_i, u_j$ and $u'_k$ where $k \in S$. 

The case when $j = i+d$ is more interesting, since all hyperedges  in $G'$ that contain both $u_i$ and $u_j$ must have zero weight, because of the definition of Two-sided $d$-Uniform Hypergraph. However, we can encode $w(v_i, v_j)$ via the extra vertex set $U_{2d}$. Let $u_j$ be the $s_j$-th vertex in $U_j$. We enumerate all $n^{d-2}$ combinations of indices $s'_{d+1}, s'_{d+2}, \ldots, s'_{j-1}, s'_{j+1}, \ldots, s'_{2d}$, such that $s'_{d+1} \oplus s'_{d+2} \oplus \cdots \oplus s'_{j-1} \oplus s_j \oplus s'_{j+1} \oplus \cdots \oplus s'_{2d-1} = s'_{2d}$. Let the $s'_k$-th vertex in $U_k$ be $u'_k$ for any $k \in \{d+1, d+2, \ldots, j-1,j+1, \ldots, 2d\}$. We add $w(v_i, v_j)$ to the hyperedge that consists of $u_i$ and $u'_k$ for every $k \in \{d+1, d+2, \ldots, j-1,j+1, \ldots, 2d\}$. 

Finally, enumerate all combinations of $s_{d+1}, s_{d+2}, \ldots, s_{2d}$ such that $s_{d+1} \oplus s_{d+2} \oplus \cdots s_{2d-1} \ne s_{2d}$. Let $u_k$ be the $s_k$-th vertex in $U_k$, for every $d+1 \le k \le 2d$.  We set the weight of the hyperedge that consists of $u_{d+1}, u_{d+2}, \ldots, u_{2d}$ to $-M'$ for some large enough $M'$. If all edge weights in $G$ are numbers in $[-M, M]$, we can set $M'$ to be $100d^{10}M$. 

The construction of $G'$ takes $O(n^d)$ time, since for each edge $(v_i, v_j)$ in $G$, we enumerate $O(n^{d-2})$ hyperedges. It remains to show that the maximum weight of $(2d-1)$-cliques in $G$ is equal to the maximum $(2d)$-hyperclique weight of $G'$. 

Pick any $2d$ indices $s_1, s_2, \ldots, s_{2d}$. Let $u_i$ be the $s_i$-th vertex in $U_i$. If $s_{d+1} \oplus s_{d+2} \oplus \cdots \oplus s_{2d-1} \ne s_{2d}$, then there will be a $-M'$ weight on the hyperedge $(u_{d+1}, u_{d+2}, \ldots, u_{2d})$, so the weight of the hyperclique $u_1, u_2, \ldots, u_{2d}$ can never be maximum. Therefore, we are forced to pick $s_{2d} = s_{d+1} \oplus s_{d+2} \oplus \cdots \oplus s_{2d-1}$. In this case, the weight of the hyperclique $(u_1, u_2, \ldots, u_{2d})$ is equal to the weight of the clique $(v_1, v_2, \ldots, v_{2d-1})$, where $v_i$ is a copy of $u_i$ for each $i < 2d$. Thus, if we invoke the $O(n^{2d-1-\epsilon})$ algorithm for the Two-sided $d$-Uniform Max-Weight Hyperclique problem on $G'$, we get the Max-Weight $(2d-1)$-Clique on $G$. 
\end{proof}

\begin{lemma}\label{lem:hyper_to_central}
If there exists an $O(n^{2d-1-\epsilon})$ time algorithm for the $d$-Dimensional Central Maximum Subarray Sum problem, then there exists an $O(n^{2d-1-\epsilon})$ time algorithm for the Two-sided $d$-Uniform Max-Weight Hyperclique problem. 
\end{lemma}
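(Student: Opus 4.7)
The plan is to reduce directly by constructing a central $d$-dimensional array whose $2^d$-corner sums encode hyperclique weights. First I would identify, in each dimension $k$, the vertex set $V_k$ with the positive coordinate range $\{1,\ldots,n\}$ and the set $U_k$ with the negative range $\{-n,\ldots,-1\}$. Any pair $(i,\delta)$ satisfying the constraints in Definition \ref{dfn:central_subarray_prob} then encodes a choice of one vertex from every one of the $2d$ sets: the positive coordinate $i_k$ selects a vertex of $V_k$, while the negative coordinate $(i-\delta)_k$ selects a vertex of $U_k$. This gives an explicit bijection between feasible pairs $(i,\delta)$ and hyperclique choices.

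Next, for each $j\in\{0,1\}^d$, the corner $A[i-\delta\odot j]$ has coordinate $i_k$ (positive) whenever $j_k=0$ and $(i-\delta)_k$ (negative) whenever $j_k=1$. Hence each corner lies in a unique quadrant, and its position picks out exactly one vertex from each $U_k\cup V_k$, i.e.\ it indexes a (potentially nonzero) hyperedge of the two-sided hypergraph. I would define $A[c]$ at each quadrant position $c$ to be the weight of the unique hyperedge it determines, and set $A[c]=0$ whenever some coordinate of $c$ equals $0$ (these entries are never queried by the constraints in Definition \ref{dfn:central_subarray_prob}).

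With this construction, $\sum_{j\in\{0,1\}^d} A[i-\delta\odot j]$ is exactly the sum of the $2^d$ possibly-nonzero hyperedges on the chosen $2d$ vertices, which by the two-sided property is the total weight of the hyperclique. Thus the Central Maximum Subarray Sum on $A$ equals the Two-sided $d$-Uniform Max-Weight Hyperclique value on the input. Filling $A$ requires time $O(2^d n^d)=O(n^d)$ since $d$ is constant, which is $o(n^{2d-1})$ for $d\ge 2$. Invoking the assumed $O(n^{2d-1-\epsilon})$ algorithm for Central Maximum Subarray Sum then solves the hyperclique problem within the same time bound.

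There is no serious obstacle: the heart of the argument is the simultaneous encoding, via the sign of each coordinate, of the choice of one vertex from each of the $2d$ sets, together with the observation that the $2^d$ corners of a box with one coordinate of each sign realize precisely the $2^d$ potentially nonzero hyperedges identified by the two-sided structure. The only points to double-check are that the index-set sizes $|\{1,\ldots,n\}|=|\{-n,\ldots,-1\}|=n$ match the sizes $|V_k|=|U_k|=n$, and that the zero-coordinate entries are never summed — both of which are immediate from Definition \ref{dfn:central_subarray_prob}.
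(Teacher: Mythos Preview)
Your proposal is correct and follows essentially the same construction as the paper: identify positive coordinates in dimension $k$ with vertices of $V_k$ and negative coordinates with vertices of $U_k$, set each nonzero-coordinate entry of $A$ to the weight of the induced hyperedge, and observe that the $2^d$ corners of a feasible box hit exactly the $2^d$ potentially nonzero hyperedges among the chosen $2d$ vertices. The paper's proof is the same argument.
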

\begin{proof}
Take any Two-sided $d$-Uniform Hypergraph $G=(V_1 \cup V_2 \cdots V_d \cup U_1 \cup U_2 \cdots \cup U_d, E)$, we index the vertices in $V_i$ and $U_i$ from $1$. We will construct a $d$-Dimensional Central Array $A$ based on $G$ such that the central maximum subarray sum of $A$ is equal to the maximum $2d$-hyperclique of $G$. If any entry of vector $i \in \{-n, \ldots, n\}^d$ is $0$, we set $A[i]$ to be $0$, since they are not relevant to the central maximum subarray sum of $A$. For any other index $i$, we choose $d$ vertices $w_1, w_2, \ldots, w_d$ from  the graph $G$ based on $i$. If $i_r > 0$ for some $r$, we choose $w_r$ to be the $i_r$-th vertex in $V_r$; otherwise, we choose $w_r$ to be the $(-i_r)$-th vertex in $U_r$. We set $A[r]$ to be the weight of the hyperedge connecting $w_1, w_2, \ldots, w_d$. 
 
Pick any $2d$ vertices $v_1 \in V_1, v_2 \in V_2, \ldots, v_d \in V_d, u_1 \in U_1, u_2 \in U_2 \ldots, u_d \in U_d$. Define two $d$-dimensional vectors $\vec{v}$ and $\vec{u}$, such that $\vec{v}_i$ is the index of $v_i$ in $V_i$, and $\vec{u}_i$ is the index of $u_i$ in $V_i$. For any $j \in \{0, 1\}^d$, let $i$ be a $d$-dimensional vector such that $i_r = \vec{v}_r$ if $j_r = 0$, and $i_r = -\vec{u}_r$ if $j_r = 1$. Also, let $W$ be a set of $d$ vertices $\bigcup_{1 \le r \le d}\{\text{if } j_r = 0 \text{ then } v_r \text{ else } u_r\}$. The entry $A[i]$ is exactly the weight of the hyperedge between vertices in $W$. Thus, the sum of all corners of the subarray whose two opposite corners are $\vec{v}$ and $-\vec{u}$ is equal to the weight of the hyperclique $(v_1, v_2, \ldots, v_d, u_1, u_2, \ldots u_d)$. 

Conversely, for any subarray of $A$ whose $2^d$ corners are in different quadrants, there exists a hyperclique in $G$ whose $2d$ vertices are from different vertex sets, by a similar argument. 

Thus, the central maximum subarray sum of $A$ is equal to the max-weight $2d$-hyperclique of $G$, so we can invoke the $O(n^{2d-1-\epsilon})$ algorithm of $d$-Dimensional Central Maximum Subarray Sum problem to output the max-weight $2d$-hyperclique of $G$. 
\end{proof}

Lemma \ref{lem:clique_to_hyper}, Lemma \ref{lem:hyper_to_central}, together with the reduction from Central Maximum Subarray Sum to Maximum Subarray \cite{backurs2016tight} imply Theorem \ref{thm:subarray_lowerbound}.

\subsection{Lower Bound for Maximum Subarray with Bounded Weight}\label{sec:lower_bound_bounded}
In Section \ref{sec:lower_bound}, we showed a tight conditional lower bound for $d$-Dimensional Maximum Subarray with real valued weights. In Section \ref{sec:upper_bound}, we also showed an algorithm that is better than this conditional lower bound, for 2D Maximum Subarray with bounded integer weights. A natural question arises: Can we prove some conditional lower bound for $d$-Dimensional Maximum Subarray when the numbers in the array are bounded integers? 

In this section, we answer this question positively by proving Theorem \ref{thm:bounded_subarray_lowerbound}. 
We notice that the reduction from Two-sided $d$-Uniform Max-Weight Hyperclique problem to Central Maximum Subarray Sum (Lemma \ref{lem:hyper_to_central}), and the reduction from Central Maximum Subarray Sum to Maximum Subarray (presented in \cite{backurs2016tight}) only increase the largest absolute value of weights by a constant factor. Therefore, we only need to show a conditional lower bound for Two-sided $d$-Uniform Max-Weight Hyperclique when the weights of the hyperedges are bounded integers. Therefore, Theorem \ref{thm:bounded_subarray_lowerbound} follows from the following lemma.  

\begin{lemma}\label{lem:clique_to_hyper_const}
If there exists an $O(n^{2d-2-\epsilon})$ algorithm (for $\epsilon > 0$) for the Two-sided $d$-Uniform Max-Weight Hyperclique problem where the hyperedges have bounded integer weights, then there exists an $O(n^{2d-2-\epsilon})$ algorithm for the $3$-Uniform $(2d-2)$-Hyperclique problem. 
\end{lemma}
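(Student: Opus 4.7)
The plan is to adapt Lemma~\ref{lem:clique_to_hyper} (which reduces Max-Weight $(2d-1)$-Clique to Two-sided $d$-Uniform Max-Weight Hyperclique with unbounded weights) to the $3$-uniform case with bounded integer weights. Assuming WLOG that the $3$-Uniform $(2d-2)$-Hyperclique instance $H$ is partite with parts $P_1, \ldots, P_{2d-2}$ of size $n$, I would construct a Two-sided $d$-Uniform hypergraph $G' = (U_1 \cup \cdots \cup U_{2d}, E')$ by setting $U_i$ to be a copy of $P_i$ for $i \in [2d-2]$ and using $U_{2d-1}, U_{2d}$ as auxiliary vertex sets of $n$ vertices labeled by integer witness indices. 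The Two-sided pairs are $(U_i, U_{d+i})$ for $i \in [d]$: pairs $1, \ldots, d-2$ each contain two hyperclique parts $(P_i, P_{d+i})$, whereas pairs $d-1$ and $d$ each contain one hyperclique part plus one auxiliary set.

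For each $3$-hyperedge $(v_a, v_b, v_c) \in H$, I would distribute a $+1$ contribution across Two-sided hyperedges. In the ``easy'' case, where parts $a,b,c$ lie in three distinct Two-sided pairs, add $+1$ to every Two-sided hyperedge containing $v_a, v_b, v_c$ in the corresponding positions, freely choosing the remaining $d-3$ vertices (so the contribution is added to $n^{d-3}$ hyperedges). In the ``hard'' case, where two of $a,b,c$ lie in the same Two-sided pair $i$, I would use an XOR-style encoding on $U_{2d-1}, U_{2d}$ analogous to how Lemma~\ref{lem:clique_to_hyper} handled the $j = i+d$ edge case, so that the auxiliary indices jointly determine the ``missing'' partner vertex, and the $+1$ contribution is added to the corresponding XOR-consistent Two-sided hyperedges. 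To enforce the XOR consistency without the $-M'$ penalty from Lemma~\ref{lem:clique_to_hyper} (which would violate boundedness), I would instead add a positive bonus $+M$ on every XOR-consistent Two-sided hyperedge, with $M$ chosen to strictly dominate the total contribution from triple checks; since the per-hyperedge triple contribution is at most $\binom{d}{3} = O(d^3)$, it suffices to take $M$ to be a $d$-dependent constant of magnitude $2^{O(d)}$, which sits within the allowed ``bounded'' regime.

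The maximum total weight of a Two-sided hyperclique in $G'$ then attains its target value (the XOR-consistency bonus plus $n^{d-3} \cdot \binom{2d-2}{3}$) if and only if a $(2d-2)$-hyperclique exists in $H$. Since $G'$ has $O(n)$ vertices per set and integer weights bounded by $2^{O(d)}$, any $O(N^{2d-2-\epsilon})$ algorithm for Two-sided $d$-Uniform Max-Weight Hyperclique with bounded integer weights yields an $O(n^{2d-2-\epsilon})$ algorithm for $3$-Uniform $(2d-2)$-Hyperclique. The hard part will be designing the $3$-uniform generalization of the XOR encoding so that the two auxiliary vertex sets $U_{2d-1}, U_{2d}$ jointly suffice to encode hard-triple checks for all $d-2$ two-part pairs in a single Two-sided instance, and balancing the weight contributions so that each in-$H$ triple is counted the same number of times regardless of whether it is ``easy'' or ``hard''.
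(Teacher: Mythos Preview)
Your high-level plan is right, but the structural choice you make diverges from the paper's and leaves the crux of the argument unfinished. The paper places \emph{both} auxiliary sets into the \emph{same} Two-sided pair: it builds $G'=(V_1'\cup\cdots\cup V_d'\cup U_1'\cup\cdots\cup U_d',E')$ with $V_i',U_i'$ copying the two halves of the $2d-2$ parts for $i\le d-1$, and with the whole $d$-th pair $(V_d',U_d')$ auxiliary, where $V_d'$ encodes $s_1\oplus\cdots\oplus s_{d-1}$ and $U_d'$ encodes $t_1\oplus\cdots\oplus t_{d-1}$. This symmetry is what makes the hard case clean: a triple with $a\in V_i,b\in U_i$ and $c\in V_j$ is encoded by a ``mostly-$V$'' hyperedge (using $V_d'$ in slot $d$), and symmetrically if $c\in U_j$. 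Because the auxiliary pair contains no hyperclique part, there is never a conflict between including $c$ and including the relevant auxiliary vertex. Your placement of the two auxiliaries in \emph{different} pairs, each paired with a real part $P_{d-1}$ or $P_d$, forces extra casework (e.g., when $v_c\in P_{d-1}$ you cannot touch $\text{aux}_1$ in that hyperedge), and you explicitly leave this ``hard part'' undone. It can likely be made to work with two one-sided XORs $r_1=\bigoplus_{k\le d-2}s_k$ and $r_2=\bigoplus_{k\le d-2}t_k$, but the paper's same-pair placement avoids the case split entirely.

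Two smaller points. First, your worry that a penalty would be unbounded is misplaced: since the $3$-uniform input is unweighted, every increment is $+1$, so a fixed penalty $-M$ with $M=100d^{10}$ (exactly what the paper uses) is bounded for constant $d$; your positive-bonus variant is fine but unnecessary. Second, your target value is off by an $n^{d-3}$ factor: each $3$-hyperedge of $H$ contributes $+1$ (not $n^{d-3}$) to the weight of a fixed hyperclique, because although its contribution is spread over $n^{d-3}$ Two-sided hyperedges, only one of those lies in any given $2d$-hyperclique. The correct target is $\binom{2d-2}{3}$, as in the paper.
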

\begin{proof}
The proof has a similar spirit as the proof to Lemma \ref{lem:clique_to_hyper}. For simplicity, we denote a $3$-Uniform Hypergraph $G$ as $(V_1 \cup V_2 \cup \ldots \cup V_{d-1} \cup U_1 \cup U_2\cup \cdots \cup U_{d-1}, E)$. Note that even though the vertex sets of $G$ are not partitioned to two parts naturally, we used $V_i$ for one half and $U_i$ for the other half. Also assume $n$ is a power of $2$ for simplicity. 

Create a two-sided $d$-uniform hypergraph $G' = (V_1' \cup V_2' \cup \cdots \cup V_d' \cup U_1' \cup U_2' \cup \cdots \cup U_d', E')$, where $V_i'$ is a copy of $V_i$ for any $i \le d-1$, and $U_i'$ is a copy of $U_i$ for any $i \le d - 1$. If we pick the $s_i$-th vertex $v_i'$ from $V_i'$, then $V_d'$ encodes the information $s_1 \oplus s_2 \oplus \cdots \oplus s_{d-1}$. Similarly, if we pick the $t_i$-th vertex $u_i'$ from $U_i'$, then $U_d'$ encodes the information $t_1 \oplus t_2 \oplus \cdots \oplus t_{d-1}$. We call $V_i$ and $U_i$ corresponding vertex sets; we also call $V_i'$ and $U_i'$ corresponding vertex sets. For any vertex set $S$, we use $S[k]$ to denote the $k$-th vertex in $S$, indexed from $0$. 
We initialize all edge weights of $G'$ to $0$.

Every $3$-hyperedge $(a, b, c) \in E$ will increase the weight of some hyperedges in $G'$ by $1$. First assume no two vertices in $\{a, b, c\}$ are from a pair of corresponding vertex sets. Let $a', b', c'$ be $a,b,c$'s copies in $G'$, respectively. Take any hyperedege $e'$ in $G'$ that contains $\{a', b', c'\}$ and $d-3$ other vertices from the first half of the vertex sets, so that every pair of corresponding vertex sets contains exactly one vertex. We increment the weight of any such hyperedge $e'$ by $1$. 

If two vertices in $\{a, b, c\}$ are from a pair of corresponding vertex sets, then without loss of generality, we can assume $a \in V_i, b \in U_i$. When $c \in V_j$ for some $j \ne i$, we increment all hyperedges consisting of vertices $V_1'[s_1], \ldots, V_{i-1}'[s_{i-1}], U'_{i}[t_i], V_{i+1}'[s_{i+1}], \ldots, V_d'[s_d]$, where $U_{i}[t_i] = b$, $V_{j}[s_j] = c$ and $V_{i}\left[\bigoplus_{1 \le k \le d, k \ne i} s_k\right] = a$. It is symmetric when $c \in U_j$ for some $j \ne i$: we increment all hyperedges consisting of vertices $U_1'[t_1], \ldots, U_{i-1}'[t_{i-1}], V'_{i}[s_i], U'_{i+1}[t_{i+1}], \ldots, U_d'[t_d]$, where $V_{i}[s_i] = a$,  $ U_{j}[t_j] = c$ and $U_{i}\left[\bigoplus_{1 \le k \le d, k \ne i} t_k\right] = b$. 

Finally, for any $s_1, s_2, \ldots, s_d$ such that $s_1 \oplus s_2 \oplus \cdots \oplus s_{d-1} \ne s_d$, we set the weight of the edge consisting of $V_1'[s_1], V_2'[s_2], \ldots, V'_d[s_d]$ to be $-M$ for $M = 100d^{10}$. Symmetrically, for any $t_1, t_2
, \ldots, t_d$ such that $t_1 \oplus t_2 \oplus \cdots \oplus t_{d-1} \ne t_d$, we set the weight of the edge consisting of $U_1'[t_1], U_2'[t_2], \ldots, U_d'[t_d]$ to be $-M$. 

The maximum absolute value of hyperedge weight of $G'$ is $M$, which is a constant when $d$ is a constant. 
By construction, $G$ has a $(2d-2)$-hyperclique if and only if the max-weight hyperclique of $G'$ has weight $\binom{2d-2}{3}$. Thus, we can solve  $3$-Uniform $(2d-2)$-Hyperclique by invoking the assumed algorithm for the Two-sided $d$-Uniform Max-Weight Hyperclique problem.

\end{proof}

\bibliographystyle{plain}
\bibliography{ref}

\appendix
\section{Derandomization of the Main Algorithm}\label{sec:general_derandom}
The only randomness used by the algorithm is to sample random $j^r \in [n]$ in Phase $2$. In order to remove this randomness, we need to first define the following notion of \textit{approximately relevant triples}. 

\begin{definition}
A triple $(i, k, j)$, where $k \in I(k'), j \in I(j')$ for some $k', j'$ divisible by $\Delta$, is called \textit{approximately relevant} if $\left|A_{i,k} + X_{k',j'}(k)^T Y_{k',j'}(j) - \tilde{C}_{i,j}\right| \le 4W$. 
\end{definition}

Approximately relevant triples are strongly related to weakly relevant triples by the following lemma. 

\begin{lemma}\label{lem:weak_is_approx}
Any weakly relevant triple $(i, k, j)$ is also approximately relevant. 
\end{lemma}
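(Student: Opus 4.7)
The plan is to prove this by a direct triangle inequality, exploiting the two hypotheses we have on hand. By the weakly relevant assumption we control $|A_{i,k}+B_{k,j}-\tilde{C}_{i,j}|$ by $3W$, and by the low-approximate-rank structure of $B$ (as assumed in Theorem~\ref{thm:low_rank_messy}) we control $|B_{k,j}-X_{k',j'}(k)^T Y_{k',j'}(j)|$ by $W$. Adding and subtracting $B_{k,j}$ inside the expression defining approximate relevance bridges the two bounds immediately.

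Concretely, I would write
\[
A_{i,k} + X_{k',j'}(k)^T Y_{k',j'}(j) - \tilde{C}_{i,j} = \bigl(A_{i,k}+B_{k,j}-\tilde{C}_{i,j}\bigr) + \bigl(X_{k',j'}(k)^T Y_{k',j'}(j) - B_{k,j}\bigr),
\]
then apply the triangle inequality to obtain the bound $3W + W = 4W$, which is precisely the definition of approximately relevant.

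There is essentially no obstacle here; the lemma is a bookkeeping statement whose sole purpose is to let Phase 3 work with a slightly relaxed enumeration set $S_{i',k',j'}$ (cutoff $2W$) and still capture every weakly relevant triple. The constants $3W$, $4W$, and $W$ were chosen precisely so that this one-line triangle inequality goes through. The only thing to verify is that the indices $k',j'$ used in the approximate-rank bound are the ones such that $k \in I(k')$ and $j \in I(j')$, which is guaranteed by the definition of approximately relevant.
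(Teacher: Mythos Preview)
Your proof is correct and takes essentially the same approach as the paper: both add and subtract $B_{k,j}$ and use the triangle inequality together with the $W$-approximate-rank bound and the $3W$ weakly-relevant bound to get $4W$. The paper phrases it via the reverse triangle inequality $\bigl||a|-|b|\bigr|\le|a-b|$ rather than the direct triangle inequality, but this is cosmetically identical.
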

\begin{proof}
Consider
\begin{equation*}
    \begin{split}
        &\left|\left(A_{i,k} + X_{k',j'}(k)^T Y_{k',j'}(j) - \tilde{C}_{i,j}\right) - \left(A_{i,k}+B_{k,j}-\tilde{C}_{i,j} \right)\right|\\
        =&\left|X_{k',j'}(k)^T Y_{k',j'}(j) - B_{k,j}\right| \le W
    \end{split}
\end{equation*}
Therefore, by the simple inequality $||a|-|b|| \le |a-b|$, we know that $$\left|\left|A_{i,k} + X_{k',j'}(k)^T Y_{k',j'}(j) - \tilde{C}_{i,j}\right| - \left|A_{i,k} + B_{k,j} - \tilde{C}_{i,j}\right|  \right| \le W.$$
For any weakly relevant triple $(i, k, j)$, $\left|A_{i,k}+B_{k,j}-\tilde{C}_{i,j}\right| \le 3W$ by definition. Since the difference between it and $\left|A_{i,k} + X_{k',j'}(k)^T Y_{k',j'}(j) - \tilde{C}_{i,j}\right|$ is bounded by $W$,  the latter cannot exceed $4W$, which means $(i, k, j)$ is approximately relevant. 
\end{proof}

Therefore, in order to cover approximately relevant triples, when computing $A^r \star B^r$, we need to keep all entries of $A^r$ that have absolute value at most $5W$, but it won't change time complexity. 

After we sample some $j^r$, if the number of uncovered, approximately relevant triples is $O(n^3 /\rho)$, then by Lemma \ref{lem:weak_is_approx}, the number of uncovered, weakly relevant triples is $O(n^3 /\rho)$ as well. In the rest of this section, we show how to \textit{deterministically} choose the set of $j^r$, so that the number of uncovered, approximately relevant triples is $O(n^3 /\rho)$ after computing $A^r \star B^r$ for all $j^r$. 

We first notice that a triple $(i, k, j)$ is approximately relevant if and only if $A_{i,k} + X_{k',j'}(k)^T Y_{k',j'}(j) - \tilde{C}_{i,j} \le 4W$, since this quantity can never be smaller than $-4W$. 
Fix $i', k', j', i \in I(i')$. For every $j \in I(j')$, we add the point $\begin{bmatrix}
-\tilde{C}_{i,j}\\
Y_{k',j'}(j)
\end{bmatrix}$ to the geometric data structure. This takes $\tilde{O}(n^3/\Delta)$ time. Then for each $k \in I(k')$, we use the geometric data structure to list points in the half-space
$\begin{bmatrix}
-A_{i,k}\\
X_{k',j'}(k)
\end{bmatrix}^T x \le 4W$. It will take $\tilde{O}(n^3 \cdot \Delta^{-1/\lfloor (d+1)/2\rfloor}) + O(\text{total number of points listed})$. For each $(i, k)$ pair, if we stop listing $j$ as soon as we get $n/\rho$ values of $j$, the total number of points listed would be $O(n^3 /\rho)$. 

Finally, for the $(i, k)$ pairs that have less than $n/\rho$ values of $j$ listed, we ignore these pairs . For every other pair $(i, k)$, we have a set $S(i, k)$ that contains $n/\rho$ values of $j$ such that $(i, k, j)$ is approximately relevant. We need to find a set of $j^r$ that intersects with each of these $S(i, k)$ sets. By the standard greedy algorithm for hitting set/set cover, we can choose $\tilde{O}(\rho)$ different $j^r$ in $\tilde{O}(n^3/\rho)$ time, so that each $S(i, k)$ contains at least one $j^r$ we choose. 

The other parts of the algorithm proceed similarly, and it will have the same running time as the \textit{randomized} version. 

\section{Limitation of the Reduction Path for Constant Weight Maximum Subarray}

Our conditional lower bound in Section \ref{sec:lower_bound_bounded} first reduces a hardness problem to the Central Maximum Subarray Sum problem, and then to the Maximum Subarray problem. Backurs et al. \cite{backurs2016tight} use a similar strategy in their reduction. Vassilevska W. and Williams \cite{vw10j} also have an intermediate problem in their reduction from Negative Triangle to 2D Maximum Subarray. This intermediate problem, similar to the Central Maximum Subarray Sum problem, also weights a subarray based on the values on the corner of the subarray.

The Central Maximum Subarray Sum problem, though nicely fits in all these previous reductions to Maximum Subarray, has a major limitation as the intermidiate problem: if the weights of the array are bounded integers, then there exists an $\tilde{O}(n^{2d-4+\omega})$ time algorithm that solves the Central Maximum Subarray Sum problem. It means that, in order to prove an lower bound larger than $n^{2d-4+\omega}$ for Maximum Subarray, we need to find some other alternative intermediate problem. 

\begin{claim}\label{lem:faster_corner}
Given a $d$-Dimensional Central Array $A$, such that all entries of $A$ are integers bounded by some constant, there exists a $\tilde{O}(n^{2d-4+\omega})$ time algorithm that computes Central Maximum Subarray Sum of $A$.
\end{claim}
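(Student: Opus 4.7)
The plan is a divide-and-conquer in the dimension: brute-force $2(d-2)$ of the $2d$ coordinates and reduce what remains to a constant-entry instance of the $2$D Central Maximum Subarray Sum problem, which in turn reduces to a constant-entry $(\max,+)$-product of $n \times n$ matrices and is therefore solvable in $\tilde{O}(n^\omega)$ time by the Alon--Galil--Margalit bound cited in the preliminaries (applied to the negated matrices to convert $\min$ into $\max$).

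First I would unfold the objective of Definition~\ref{dfn:central_subarray_prob}. Writing $q_r := i_r - \delta_r \in \{-n,\dots,-1\}$, the $2^d$ corners of the subarray are exactly the points whose $r$-th coordinate is either $i_r$ or $q_r$, so the objective equals $\max_{i \in [n]^d,\, q \in \{-n,\dots,-1\}^d} F(i,q)$ with
$$F(i,q) \;=\; \sum_{s \in \{+,-\}^d} A\bigl[x(s;i,q)\bigr], \qquad x(s;i,q)_r = \begin{cases} i_r, & s_r=+,\\ q_r, & s_r=-.\end{cases}$$
Splitting $s = (s_1, s_2)$ with $|s_1| = d-2$, I would iterate over all $n^{2(d-2)}$ choices of the outer coordinate tuples $I := (i_1,\dots,i_{d-2})$ and $Q := (q_1,\dots,q_{d-2})$, and for each such choice build the auxiliary $(2n+1) \times (2n+1)$ matrix
$$A'[y_1,y_2] \;:=\; \sum_{s_1 \in \{+,-\}^{d-2}} A\bigl[x(s_1;I,Q),\,y_1,\,y_2\bigr].$$
Since the inner sum has $2^{d-2} = O(1)$ terms and every entry of $A$ is $O(1)$, all entries of $A'$ are bounded by a constant depending only on $d$, and $A'$ is assembled in $O(n^2)$ time; by construction $F(i,q) = A'[i_{d-1},i_d] + A'[q_{d-1},i_d] + A'[i_{d-1},q_d] + A'[q_{d-1},q_d]$, which is precisely the $2$D central sum on $A'$.

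The remaining inner maximization decouples across the $(d-1)$-axis: for each fixed $(i_d, q_d)$,
$$\max_{i_{d-1}, q_{d-1}} F(i,q) \;=\; f(i_d, q_d) + g(i_d, q_d),$$
where $f(i_d, q_d) = \max_{i_{d-1} \in [n]} (A'[i_{d-1},i_d] + A'[i_{d-1},q_d])$ and $g(i_d, q_d) = \max_{q_{d-1} \in \{-n,\dots,-1\}} (A'[q_{d-1},i_d] + A'[q_{d-1},q_d])$. Each of $f$ and $g$ is a single $(\max,+)$-product of two $n \times n$ submatrices of $A'$ whose entries are $O(1)$, so Alon--Galil--Margalit computes both in $\tilde{O}(n^\omega)$ time, after which the outer $\max_{i_d, q_d}[f + g]$ costs $O(n^2)$. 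Multiplied by the $n^{2(d-2)}$ outer iterations, this totals $\tilde{O}(n^{2d-4+\omega})$, matching the claim. The only thing worth monitoring is that entries of $A'$ inflate by the $2^{d-2}$ factor from the inner sum, but for constant $d$ this is absorbed into the $O(1)$ bound on which AGM depends, so no additional obstacle arises and the rest is mechanical.
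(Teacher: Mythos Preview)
Your proposal is correct and follows essentially the same approach as the paper: brute-force the first $d-2$ dimensions (incurring the $n^{2(d-2)}$ factor and a $2^{d-2}$ blowup in entry size), then solve the residual 2D Central Maximum Subarray Sum by decoupling along one axis into two bounded-entry $(\max,+)$-products computable in $\tilde{O}(n^\omega)$ via Alon--Galil--Margalit. The only cosmetic difference is which of the two remaining axes you fix first in the 2D step, which is immaterial.
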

\begin{proof}[Proof Sketch]
When $d>2$, we can exhaustively enumerate all possible values of the first $d-2$ dimensions, and weights of the remaining $2$D problem can be at most $2^{d-2}$ times larger than the original weights. When $d$ is a constant, the resulting $2$D problem also has entries bounded by a constant. Therefore, it is sufficient to show an $\tilde{O}(n^\omega)$ algorithm for the $2$D case. 

The 2D case is similar to Tamaki et al.'s algorithm for 2D Maximum Subarray \cite{tamaki1998algorithms}. Using Min-Plus product for matrices with bounded integer weights, we can compute in $\tilde{O}(n^\omega)$ time, for every $x_1 < 0 < x_2$, 
$$1)\ d_{x_1, x_2} = \max_{y_1 < 0} \left\{A_{x_1, y_1} + A_{x_2, y_1}\right\}, \ \ \ \ \ \ \ \ \ 2)\  u_{x_1, x_2} = \max_{y_2 > 0} \left\{ A_{x_1, y_2} + A_{x_2, y_2}\right\}. $$
Then the central maximum subarray sum of $A$ is $\max_{x_1 < 0 < x_2} \left( d_{x_1, x_2} + u_{x_1, x_2} \right)$.  
\end{proof}

\end{document}